\documentclass{sig-alternate-10}






\newcommand{\ignore}[1]{} 

\newcommand{\here}[1]{{\bf *** #1 ***}}

\newcommand{\SAT}{{\sf SAT}}

\newcommand{\MC}{{\sf MC}}
\newcommand{\MX}{{\sf MX}}
\newcommand{\MXE}{{\sf EV}}
\newcommand{\QE}{{\sf QE}}
\newcommand{\REACH}{{\sf REACH}$({\strA},\bigwedge \bar{\cal E} )$}
\newcommand{\Reachphi}{{\sf REACH}$_{_{\alpha}}({\strA}, \phi )$}
\newcommand{\TS}{{\sf TS}$_{\alpha}$}


\newcommand{\TMC}{{\sf temp-MC}}
\newcommand{\TSAT}{{\sf temp-SAT}}

\newcommand{\blue}[1]{\textcolor{blue}{#1}}

\newcommand{\red}[1]{\textcolor{red}{#1}}






\newcommand{\strA}{\cA}
\newcommand{\strB}{\cB}         
\newcommand{\strC}{\cC}         





\newcommand{\cA}{{\cal A}}
\newcommand{\cB}{{\cal B}}

\newcommand{\cC}{{\cal C}}



\makeatletter
\providecommand*{\cupdot}{%
  \mathbin{%
    \mathpalette\@cupdot{}%
  }%
}
\newcommand*{\@cupdot}[2]{%
  \ooalign{%
    $\m@th#1\cup$\cr
    \hidewidth$\m@th#1\cdot$\hidewidth
  }%
}
\makeatother

\ignore{
\usetikzlibrary{calc,shadows,arrows,decorations.pathreplacing}
\pgfdeclarelayer{background}
\pgfdeclarelayer{foreground}
\pgfsetlayers{background,main,foreground}
 
\tikzstyle{materia}=[draw, fill=blue!20, text width=6.0em, text centered,
  minimum height=1.5em,drop shadow]
\tikzstyle{module} = [materia, text width=8em, minimum width=10em,
  minimum height=3em, rounded corners, drop shadow]
\tikzstyle{texto} = [above, text width=6em, text centered]
\tikzstyle{linepart} = [draw, thick, color=black!50, -latex', dashed]
\tikzstyle{line} = [draw, thick, color=black!50, -latex']
\tikzstyle{ur}=[draw, text centered, minimum height=0.01em]

}







\newcommand{\bx}{\bar{x}}                
\newcommand{\bX}{\bar{X}}  
\newcommand{\by}{\bar{y}}                
\newcommand{\bY}{\bar{Y}}  

\newcommand{\ba}{\bar{a}} 
\newcommand{\bb}{\bar{b}}

\newcommand{\R}{\bar{R}}   

\newcommand{\bR}{\bar{R}} 

\frenchspacing
\usepackage{helvet}
\usepackage{paralist}
\usepackage{graphicx}
\usepackage{epsfig}
\usepackage{wrapfig}
\usepackage{tikz}
\usepackage{makeidx}
\usepackage{stmaryrd}

\usepackage{amsmath,amssymb,amsfonts,mathabx}

\usepackage{epic,eepic,epsfig,oldgerm,euscript}

\usepackage{xspace,enumerate,bm}

\newcounter{inlineequation}
\setcounter{inlineequation}{0}

\def\zd{,\ldots,}

\newtheorem{corollary}{Corollary}
\newtheorem{definition}{Definition}
\newtheorem{example}{Example}

\newtheorem{proposition}{Proposition}
\newtheorem{remark}{Remark}
\newtheorem{theorem}{Theorem}

\setlength{\parindent}{0pt}

\title{ Lifted Relational Algebra with Recursion \\
	and Connections to Modal Logic}
\author{Eugenia Ternovska}

\begin{document}

\maketitle

\begin{abstract}

	We propose a new formalism for specifying and reasoning about problems that involve heterogeneous ``pieces of information'' -- large collections of data, decision procedures of any kind and complexity and connections between them. The essence of our proposal is to lift Codd's relational algebra from operations on relational tables to operations on classes of structures (with recursion), and to add a direction of information propagation. We observe the presence of information propagation in several formalisms for efficient reasoning and  use it  to express unary negation and operations used in graph databases. We carefully analyze several reasoning tasks and  establish a precise connection between a generalized query evaluation and temporal logic model checking. Our development allows us to reveal a general correspondence between classical and modal logics and may shed a new light on the good computational properties of modal logics and related formalisms.

\end{abstract}

\section{Introduction}\label{sec:introduction}

\ignore{

\begin{quote}	
	{\it ``There is a point where in the mystery of\\
		existence contradictions meet;\\
		where movement is not all movement\\
		and stillness is not all stillness;\\
		where the idea and the form,\\
		the within and the without, are united;\\
		where infinite becomes finite,\\
		yet not losing its infinity.''\\
		Rabindranath Tagore}
	
\end{quote}

\begin{quote}	\begin{small}
	{\it ``There is a point where in the mystery of
		existence contradictions meet;
		where movement is not all movement
		and stillness is not all stillness...
		where the idea and the form,
		the within and the without, are united;
		where infinite becomes finite,
		yet not losing its infinity.''}
{\it ${\ \ \ \ \ \ \ \ \ \ \ \ \ \ \ \ \ \ \ \ \ \ \ \ \ }	$ Rabindranath Tagore}
	\end{small}
	\end{quote}
}
	
\ignore{

design, semantics, query languages

data models, data structures, algorithms for data management

concurrency and recovery, distributed and parallel databases, cloud computing

model theory, logics, algebras, computational complexity

graph databases and (semantic) Web data

data mining, information extraction, search

data streams

data-centric (business) process management, workflows, web services

incompleteness, inconsistency, uncertainty in databases

data and knowledge integration and exchange, data provenance, views and data warehouses, metadata management

domain-specific databases (multi-media, scientific, spatial, temporal, text)

deductive databases

data privacy and security
}


\ignore{
\subsection{Motivation}


This paper .... aims at developing foundations for scalable Knowledge Representation (KR) 
techniques that support modular high-level (declarative) problem solving.
\ignore{
	Organizations accumulate and have access to an increasing amount of data. Governments, businesses and scientists  want to obtain maximal benefits from their  data; 
	this often involves solving large-scale combinatorial search and optimization  problems. For example, industry analysts want to calculate optimal supply chains,
	businesses want  to organize efficient logistics management,  governments   want  to design the most effective policies and earth scientists want to perform simulations of climate change.  
} 
Businesses are complex. They deal with
an ever-increasing amount of data,
increasingly  complex tasks (including combinatorial optimization),
multiple inter-connected components, complex  processes,
highly heterogeneous data,
incomplete or partially visible data (for security reasons).
Company employees are the best people to 
identify, develop,	 maintain and 	control	such complex entities and tasks. 
These people  are usually specialists in their own area and  do not have extensive programming skills. 
They need tools that  hide the details of the implementation and are easy to use. Such tools have to be able to, first, deal with large amounts of heterogeneous information, and second, be able to solve computationally hard problems,  two properties that I call, respectively, representational and computational scalability.
The core {\em representational} challenge is {\bf how to  deal with  atomic objects of a  large magnitude},
e.g., collections of databases, databases where some information is 
unknown or hidden from the outside because of security and privacy concerns,
solutions to combinatorially complex problems
such as the Travelling Salesman Problem. 
The main {\em  computational} challenge is {\tt relational model, graph databases don't but could use solvers} {\bf how to  control computational complexity of finding solutions},
and to use this understanding to scale to more complex problems as the hardware capabilities grow. 
The important questions are: 
What is a good declarative language for combining modules?
How do various ways of combining modules 
affect the computational complexity of finding solutions?  
How can we automatically synthesize and solve modular systems 
and evaluate our algorithms?

\ignore{
10 правил питча
---------------
1. У вашего продукта не может не быть конкурентов. Забудьте слова "уникальный" и "не имеющий аналогов".
2. Если у людей есть такая потребность, то они ее уже как-то удовлетворяют. Теми же способами, которые предлагаете и вы – прямые конкуренты. Или другими способами, но ту же потребность – непрямые конкуренты.
3. Если конкурентов – ни прямых, ни непрямых – нет, значит и потребности нет. Потребности нет – ваш продукт никому не будет нужен. Большинство стартапов умирает по этой причине.
4. Если нет прямых конкурентов, то высока вероятность того, что они были, но умерли, потому что экономика такой бизнес-модели не сходится. Не знаете их конкретно – не беда. Предположите, что они были и подумайте, почему их экономика не сошлась. Расскажите, чем ваша бизнес-модель будет отличаться.
5. Начинайте рассказ о своем продукте с рассказа о чужих продуктах. Кто ваши основные конкуренты? По какому параметру ваш продукт лучше с точки зрения потребителя?
Это первый переломный момент. Если вы не рассказали про конкурентов и не объяснили, чем вы лучше их, то дальше слушать неинтересно – либо потребности нет, либо вы не понимаете ничего про потребности и конкуренцию, либо не умеете гуглить.
6. Вы можете быть лучше, чем конкуренты, только за счет того, что вы что-то делаете по-другому или у вас есть нечестное конкурентное преимущество. Нельзя быть лучше, потому что вы что-то делаете лучше.
Это второй переломный момент. Если у вас нет "чего-то другого" или "нечестного преимущества" – то разговаривать дальше не о чем, в продукте нет внутренней ценности, вы просто планируете построить свой маленький свечной заводик на чужие деньги.
7. Где деньги? Опираться на юнит-анализ. Средний чек, стоимость привлечения, экономика одного заказа. Если есть заказы от друзей, знакомых – не катит. Если бизнес хочет и может расширяться, то способы привлечения клиентов могут быть только платными. Фраза "мы не потратили на маркетинг ни копейки" является стоп-словом.
Это третий переломный момент. Если вы не знаете, где будете искать клиентов, если вы не понимаете, во сколько вам один покупатель будет обходиться, то, вероятнее всего, вы будете терять деньги на каждом заказе. Вкладываться в планово убыточный проект – неинтересно. Ждать волшебного момента "когда о нас узнают все" – денег не хватит.
8. Что вы сделали для того, чтобы проверить вашу гипотезу о том, что ваше "по-другому" действительно "лучше" для потребителя? Ссылки на американские аналоги не катят. Интересует только то, что сделали лично вы с командой. Опросы друзей и знакомых не катят. "Предварительные разговоры с клиентами" не катят. Minimum viable product – это о продажах, а не о разговорах.
9. Если вы не смогли сделать minimum viable product для проверки гипотезы своими силами и своими ресурсами, то у вас неполноценная команда. Либо вы не умеете придумать минимальный продукт, а значит вы будете постоянно требовать и прожирать деньги на постройку космолета. Либо у команды нет полного набора ключевых компетенций.
Это четвертый и последний переломный момент. Инвестируют не в идею, а в команду. Если команда неполноценная, то инвестировать не во что.
10. Все остальное – только после этой черты. Показывать продукт – здесь. Детали того, как устроен ваш проект – здесь. Что вы хотите – денег, опыта, связей – тоже только здесь.

}

\subsection{Related Work: What we Do Differently}

My research  addresses {\em complementary} issues.
{\bf Relative to state-of-the-art work in the field}, it is distinguished   by (1) A model-theoretic and algebraic approach.

(3) A close coexistence of a static and a dynamic view, 
This  difference is crucial for all the developments proposed here.

}


Our goal is to develop a formalism for linking various pieces of information, including powerful solvers.
We take our inspiration in one the best-known and most influential formalisms in the Database community.

In 1970	Edgar (Ted) F. Codd introduced a relational data model
and two query languages:  relational calculus and relational algebra.
Relational calculus is what we usually call first-order logic.
The key contribution of Codd was to associate with a declarative specification language (first-order logic),
a  procedural counterpart which is the relational algebra, that later was implemented by smart engineers  and  became
	a multi-billion dollar industry of relational database
	management systems (RDBMS).



A significant change has happened in the past decade.  
While, at the low level, everything data-related boils down to  SQL queries,
{\em interactions  between ``larger'', combinatorially harder, pieces became increasingly important.}
Such ``larger pieces'' are business enterprises, knowledge bases,
 web services, software, solvers in the world of declarative problem solving, 
 collections of learned data, potentially with numeric ranking assigned, etc.
In particular, combinations of powerful solvers is already  common in several communities, but has yet to make its way into the Database world.


\ignore{\tt
 	
 		Combinations of solvers (motivated by suitability of the corresponding modelling language to particular sub-problems)
 		is already a reality. Notable examples include Satisfiability Modulo Theory (SMT) solvers, ASP-CP combinations,
 		integrated modelling languages such as Comet in the CP community.

 	CP
 	
 	http://www.imada.sdu.dk/~marco/Teaching/
 	
 	AY2011-2012/DM826/Slides/dm826-lec1.pdf
 	see the file saved
 	
 	Excellent examples!
 	
 	Hybrid Methods Hybrid Modelling
 	Strengths:
 	CP is excellent to explore highly constrained combinatorial spaces quickly
 	Math programming is particulary good at deriving lower bounds
 	LS is particualry good at derving upper bounds
 	How to combine them to get better “solvers”?
 	Exploiting OR algorithms for filtering
 	Exploiting LP (and SDP) relaxation into CP
 	
 	Integrated Modeling
 	
 	Models interact with solution process hence models in CP and IP are
 	different.
 	To integrate one needs:
 	to know both sides
 	to have available a modelling language that allow integration
 	(Comet)
 	There are typcially alternative ways to formulate a problem. Some may yield
 	faster solutions.
 	Typical procedure:
 	begin with a strightforward model to solve a small problem instance
 	alter and refine the model while scaling up the instances to maintain
 	tractability
 	13
 }

\ignore{ 
To be more concrete, let us think that each one of these  heterogeneous ``larger pieces'' is represented declaratively, e.g.,
 by Logic Programming rules, ILP equations, SMT theories, FO knowledge bases, etc.

The main challenge
is that the programs may be written in different languages (even legacy 
languages), and rely on different solving technologies, and business enterprises may look like ``black boxes'' from the outside.
	\blue{Interactions between business enterprises (back boxes) and web service compositions
		can also be modelled through compositions of classes of structures. }
	
	}

\ignore{\tt 
 algorithms for solving is the main and the most important direction. 
	
	Solutions have to jointly satisfy the combinatorial 
	optimisation solvers and the algebraic expression. 
	We have two papers on that, and are working on more. 
	There is a lot that needs to be done.

	One direction I am considering is to have those databases
	be learned from text documents (e.g. financial reports, election polls etc.).
	So, statistical learning needs to be involved. I have a student 
	working on a version of the algebra where weights (e.g. probabilities) 
	are assigned to structures and collections of structures. 
	
	}

While  database queries, expressed using Codd's  relational calculus, can be viewed 
as   {\em relations definable with respect to a structure} (a database),
{declarative problem specifications} can be understood as  
{ \em axiomatizations of classes of structures}.
The two notions 	 (in italic)	are defined in two consecutive chapters in the classic textbook 
on mathematical logic \cite{Enderton}.

Our {\bf first main idea} is to lift  Codd's algebra from operations on relational tables to operations 
on {\em classes of structures} (which we call modules), and to add recursion. 
Each atomic module can be given, e.g., by a set of constraints in a constraint formalism 
that has an associated solver, such as Answer Set Programming, or 
Constraint Satisfaction Problem, or Integer Linear Programming. It can even be an agent or a business enterprise,
a  collection of databases, or a database that is only partially  visible from the outside. 
Regardless of how modules are specified, our lifted relational algebra  views them as atomic entities.





Binary relations, or relations with some kind of directionality, are very common.
Problem solving often involves finding solutions for given inputs.
Most combinatorial problems are of that form, many software programs and hardware devices are of that form. 
Graph databases use binary relations to link data. Role specifications in Description Logics, which are used to describe ontologies, 
some constructs in Datalog$^+_-$, temporal connections and specifications of action effects have directionality or describe information propagation.
Our {\bf second main idea} is to add  
information flow to the algebra. By doing so, we force expressions with multiple variables into binary
and produce a calculus of binary relations. Modules with information propagation become (non-deterministic)
actions. Modules without information propagation become propositions.
Our  algebra with information flow is exactly like our first algebra, that is, like Codd's relational algebra (with  recursion),
but has information propagation added.




The combination of the two ideas allows us to { demonstrate a general connection between classical and modal logics}
through investigating the effects of adding information flow. 	
We show that many formalisms for efficient reasoning are instances of the same phenomenon,
which is also responsible for good computational properties of modal logics. 
Through our detailed study of several reasoning tasks, we provide an explanation of robust decidability 
of modal logics.		
		
\ignore{		In particular, because of lifting, we can uniformly talk about states as structures (databases).
			Interpreting our algebra over a transition system
		gives rise to a modal temporal logic. 
			In doing so, we discovered ...
		It turns out that this tiny addition produces 
		is responsible for phenomena 	
	}

\subsubsection*{Structure of the Paper}

In Section \ref{sec:Algebra}, 
we  define our lifted version of Relational Algebra with recursion.
We call this version of the algebra ``flat'' to distinguish it from the version 
with information propagation.	We use a model-theoretic semantics, although various generalizations are possible.
In Section \ref{sec:MX-Related-Tasks}, we define several reasoning tasks, in particular, Model Expansion that is 
responsible for adding information propagation.

In Section \ref{sec:Algebra-Inf-Flow}, we define our algebra with  information flow, that we also  call Dynamic Algebra  or a Calculus of Binary Relations.
In this algebra, second-order variables provide an easy way to model data flow.
We interpret the Dynamic Algebra  
over transition systems where  states are structures.
We show that many interesting operations such as unary negation, constant tests and subexpression tests 
are definable in the algebra. We also discuss sequential composition and the reverse operations.
Using the  Dynamic Algebra, we produce a modal temporal logic $L\mu\mu$
and prove the equivalence of this modal logic to the calculus of binary relations.
We show that several formalisms, an expressive Description Logic, Dynamic Logic, and Datalog$^+_-$ 
also use information propagation and can  be viewed as fragments of $L\mu\mu$. 

In Section \ref{sec:Queries-Machines-Modalities}, that we call {\em Queries, Machines, Modalities}, 
we introduce a model of computation (simple abstract machines) that are suitable for declarative problem solving.
The machines are similar to Abstract State Machines of Yury Gurevich, however, in our machines, not only states, but also  actions are structures.
We discuss several important complexity measures. Formulae of our modal logic are programs for these machines. 
We describe reaching a solution to an algebraic specification in terms of these new devices.
We then define a generalization of the Query Evaluation problem.
Our main result is an  equivalence  between 
the reachability in the execution graph, the general Evaluation problem and and temporal model checking. 
 The property holds under any assignment of the direction of information propagation to the internal 
 relational variables. We then provide our explanation of why modal logics are so robustly decidable.



\ignore{
An important difference with  the previous research on solvers is that typical solver/theory combinations (e.g., ASP-CP) are essentially conjunctions,
while I propose multiple operations for combinations, including recursion.
The proposed direction is  different also because 
while combinations of some theories (considered in the SMT research) may not be decidable for satisfiability, they are decidable for (finite) model expansion, since the domain is given.

 --------------

Under the ``still''  or ``static'' view, the algebra is the same as Codd's relational 
algebra (with recursion added), but operations are applied to classes of structures 
instead of relational tables.
Under the ``dynamic'' view, when we indicate the direction of information flow, 
the same algebra is a modal temporal logic.

}

\ignore{

We use the algebra for a high-level encoding of problem solving on graphs using Dynamic Programming on tree decompositions.
We also use it to specify an algorithm for solving quantified boolean formulas.

	We show that the well-known Propositional Dynamic Logic is a fragment of the algebra with information flow.
	

}

\ignore{
\subsubsection{Related Work}

Recently, there has been a lot of work on  {\em technology integration}. 
Examples include but are not limited to 
\cite{DBLP:journals/amai/MellarkodGZ08,DBLP:conf/lpnmr/BalducciniLS13,DBLP:conf/cpaior/0001KMO14,ASP-CP-combination,Picat}. 
Combined solving is perhaps most developed in the SMT community, where theory propagations are tightly interleaved with satisfiability solving
\cite{DBLP:journals/jacm/NieuwenhuisOT06,SMT}.
Declarative and imperative types of programming sometimes have to be combined for best results.

}

\section{``Flat'' Algebra}\label{sec:Algebra}
We call this version of the algebra ``flat'' to distinguish it from the version 
with information propagation below.

\subsection{Syntax of the ``Flat'' Algebra}

Let $\tau_M =\{ M_1, M_2, \dots \}$ be a fixed vocabulary of {\em atomic module symbols}.
Atomic module symbols are of the form $M_i(X_{i_1},\dots,X_{i_k})$,  (also written $M_i(\bar{X})$), 
where each $X_i$ is a relational variable.  Each $X_j$ has an associated arity $a_j$. The set $\{ X_{i_1},\dots,X_{i_k} \}$  is called the {\em variable vocabulary of $M_i$} and is denoted $vvoc(M_i)$.
Modules in $\tau_M$ are {\em atomic}. Modules that are not atomic are called {\em compound}.
Algebraic expressions for {\em modules} are built by the grammar:
	\begin{equation}\label{eq:algebra}
    E::= \\ \bot |  M_i  | Z_j  | E \cup  E | 
- E | \pi_{\delta} E | \sigma_\Theta E |  \mu Z_j. E.
\end{equation}
Here,   $Z_j$ is a \emph{ module variable}. It must occur positively in the expression $E$, i.e., under an even number of the complementation ($-$) operator.
 The  set-theoretic operations are union ($ \cup$) and complementation ($-$). Intersection ($ \cap$) and set difference are expressible. 
Projection ($\pi_\delta E$) is a family of unary operations, one for each possible set of relational variables $\delta$. Each  symbol  in $\delta$ must appear in $E$. The condition  $\Theta$ in selection $ \sigma_{\Theta} E $ is an expression of the form $L_1\equiv L_2$, where $L_i$ is a relational variable or `$R$', where $R$ is a relation (set of tuples of domain elements).\footnote{A more general  version allows $\Theta$ to be built 
	up using Boolean operations from the
	equivalence and non-equivalence operators, $ \equiv$,  $\not \equiv$. That choice of $\Theta$ may be more convenient to implement,
	but does not add expressive power since it is expressible trough the other operations.}
Thus, we bring semantic elements into syntax, and they became constants in the language.   
The operations (except  $\mu Z_j. E$) are essentially like in Codd's relational algebra, except we include full complementation ($-$) instead of set difference for generality. However, the operations are on objects of  a higher order -- on {\em classes of structures} rather than on relational tables.

\begin{remark}
While we follow an algebraic approach in presenting the syntax, it will be seen from the semantics that  {\bf the  constructs in this paper  work the same way  as the corresponding constructs in logic. } 
\end{remark}
The formalism is equivalent to a ``lifted'' version of first-order logic with the least fixed point operator (FO(LFP)).
The ``lifting'' means that instead of regular predicate symbols, we have modules who's computational power we can control, and instead of object variables that range over domain elements, we have second-order variables ranging over relations. 
In particular, projection is onto a set of {\em relational} variables, selection is equality of relations rather than of domain elements.
Note that first-order quantification is ``encapsulated'' in an axiomazation of each atomic module (e.g., an ILP or an ASP module) and is not visible from the outside of that module.
Another way of viewing our formalism is SO$+$LFP$-$FO, that is, second-order logic with least fixed point ``without'' the first-order part.

\subsection{Examples}

We now give several examples of combinations of modules (pieces of information)  in the algebra. 
For simplicity, we  use common combinatorial problems  since they are very familiar to most readers. 
	A combinatorial problem can be viewed as 
 a class of structures.\footnote{A module is a {\em set} of structures when the domain is fixed.}

\begin{example}\label{ex:HC-2Col} 
	{\rm  
		Let $M_{\rm HC}(V,X,Y)$ and $M_{\rm 2Col}(V,X,Z,T)$ be atomic modules ``computing'' a Hamiltonian Circuit and a 2-Colourability. They can do it in different ways.
		For example, $M_{\rm HC}$ can be an Answer Set Programming program, and $M_{\rm 2Col}$ be an imperative program or a human child with two pencils.  Here, $V$ is a relational variable of arity 1, $X,Y$ are relational variables of arity 2, and the first module decides if $Y$ forms a Hamiltonian Circuit (represented as a set of edges) in the graph given by vertex set $V$ and edge set $X$. Variable $X$ of the second module has arity 2, and variables $Z,T$ are unary; the module decides if unary relations $Z,T$ specify a proper 2-colouring of the graph with edge set $X$. The following algebraic expression determines a combination of 2-Colouring and Hamiltonian Circuit, that is whether or not there is a 2-colourable Hamiltonian Circuit.\footnote{We use $:=$ for ``is by definition''.}
		\begin{equation} \label{3-Col-HC}
		\begin{array}{l}
		M_{\rm 2Col-HC}( V,X,Z,T) \ :=\  \\
		\hspace{-5mm}	\pi_{V, X,Z,T} ((M_{\rm HC}(V, X, Y)  \cap 	M_{\rm 2Col}(V, Y,Z,T)). 	
		\end{array}
		\end{equation}

		Projection ``keeps'' $V,X,Z,T$ and  hides   the interpretation of $Y$ in  $M_{\rm HC}$, since it is the same as $Y$'s in $M_{\rm 2Col}$.
	}
\end{example}

\begin{example}        \label{ex:LSP}   
	{\rm  	This modular system can be used by a company that provides logistics services 	(arguments of atomic modules are omitted).
		$$M_{LSP}:= \sigma_{B\equiv B'}(M_K  \cap M_{TSP}).$$
		It decides how to pack goods and
		deliver them.
		It solves two NP-complete tasks interactively, -- Multiple Knapsack (module $M_K$) and Travelling Salesman Problem (module $M_{TSP}$).
		The system takes orders from customers (items  to deliver, their profits, weights), and the capacity of trucks, decides how to pack  items in trucks, and for each truck, 
		solves a TSP problem. 
		The feedback  $B'$ about solvability of TSP is sent back to $M_K$.
		The two sub-problems, $M_K$ and $M_{TSP}$, are solved by different sub-divisions of the company (potentially, with their own business secrets) that cooperate towards the common goal. 
		A solution to the compound module, $M_{LSP}$, to be acceptable, must satisfy both sub-systems.
	}
\end{example}

In some specifications, the use of a {\bf recursive construct} is essential. For example, we may need to specify a recursive algorithm,
or the semantics of the satisfaction relation in a logic, which is given by an inductive definition.
\begin{example}{\rm 
		In this example, we need recursion to specify a Dynamic Programming algorithm on a tree decomposition of a graph.
		The modules we use are $M_{\rm TD}$ that performs tree decomposition of each graph, and
		$M_{\rm 3Col}$ that is used recursively to performs 3-Colouring on each bag of the decomposition. 
		The problem is represented as 
			$$M_{\rm TD}  \cap \mu Z. \Psi ( Z, M_{\rm TD},M_{\rm 3Col}  ),$$ 
		\noindent where  $Z$ is a module variable over which recursive iteration is performed, the least fixed point expression
		$\mu Z. \Psi ( Z, M_{\rm TD},M_{\rm 3Col}  ) $ specifies the dynamic programming algorithm.
		Details are omitted because of lack of space.
}		
\end{example}

\ignore{

	Examples where a (declaratively specified) module 
	calls itself recursively are much harder to come by.
	We will however show two  natural applications. In the first one, 
	combinatorial  search problems on graphs are solved using their tree decompositions.

In the following example, recursion is used to specify an algorithm for solving a combinatorial search problem on graphs using  tree decompositions. 
\begin{example}{\rm 
		In one of the applications we consider below, we need recursion to specify a Dynamic Programming algorithm on a tree decomposition of a graph.
		The modules we use are $M_{\rm TD}$ that performs tree decomposition of each graph, and
		$M_{\rm 3Col}$ that performs 3-Colouring. 
		The problem is represented~as \begin{small}
			$$M_{\rm TD}  \cap \mu Z. \Psi ( Z, M_{\rm TD},M_{\rm 3Col}  ),$$ 
		\end{small}
		\noindent where  $Z$ is a module variable over which recursive iteration is performed, the least fixed point expression
		$\mu Z. \Psi ( Z, M_{\rm TD},M_{\rm 3Col}  ) $ specifies the dynamic programming algorithm.
		Details will be explained in Example~\ref{Ex:TD}.\\
	}

\end{example}

\begin{example} {\rm 
		One can solve satisfiability problem for varous logics by axiomatizing the satisfaction relation for those logics and 
		then solving the satisfiability of the formulas given on the input  by running a universal solving engine, as is done 
		in e.g. \cite{Bart-IJCAI}.	
		The satisfaction relation is inherently recursive. {\tt FO(LFP)= WF-DATALOG }}
\end{example}

\begin{example} {\rm 
		{\tt Pressburger arithmetic example }		\blue{TO DO}
	}
\end{example}

\begin{example} {\rm 
		{\tt FO-expression complexity example }		\blue{TO DO}
	}
\end{example}

\begin{example} {\rm 
		{\tt LTL example }		\blue{TO DO}
	}
\end{example}

\begin{example} {\rm 
		In another application, we need recursion to specify an algorithm for solving quantified boolean formulas (QBFs).
		The specification we obtain defines the semantics of QBFs. Similar ideas can be applied to defining the semantics 
		of many other expressive logics.
	}
\end{example}

\begin{example} {\rm 
		In this application, we specify the semantics of monadic second-order logic (MSO) of one successor 
		over strings. \blue{TO DO	}}
\end{example}

}

\subsection{Valuations}

To interpret   algebraic expressions, we use valuations. 
This is an important notion used throughout the paper.
\begin{definition}[Valuation] Valuation $(v, {\cal V})$ is a pair of functions. Function
 $v$ maps relational variables in $vvoc(M_i)$ to symbols in a relational  vocabulary $\tau$ so that the arities of 
the relational variables in $vvoc(M_i)$  match those of the corresponding symbols in $\tau$. 
Function $\cal V$ is parameterized by $v$ and provides a domain  (which does not have to be finite),
and  interpretations of atomic modules $M_i$ as follows. 
Let $V$ be the set of all  $\tau$-structures over the domain fixed by $\cal V$.
Valuation $\cal V$ maps each atomic module symbol $M_i$ to a subset ${\cal V}(v,M_i)$ of  $V$
so that for any two $\tau$-structures $\cA_1$, $\cA_2$ which coincide on $vocab(M_i)$,
we have  $\cA_1 \in {\cal V}(v,M_i)$ iff  $\cA_2 \in {\cal V}(v,M_i)$. 
\end{definition}
All of the above applies to module variables $Z_j$ as well.
In practice,  $\cal V$ can, for example,  associate one module symbols with stable models of an ASP 
program, another module symbol with models of an ILP encoding, yet another one with a set of databases used by a particular enterprise, etc.
We will also see, in part \ref{subsubsec:SO-FO}, that it is also possible to treat each modules simply as a predicate symbol.     

\ignore{, where $v$ is a predicate valuation
	that maps each predicate variable in $vvoc(M_i)$ to a predicate constant (of the same arity) from $\tau$; and $\cal V$ 
	is a module  valuation parameterized by $v$. First, module valuation $\cal V$  fixes the domain. The domain does not have to be finite.
	Let ${\cal C} $ be the set of all  $\tau$-structures over this domain.
	Modules (atomic and compound) are interpreted by subsets of  $V$.
	Module valuation $\cal V$ maps a  or an atomic module symbol $M$ to a subset ${\cal V}(v,Z)$ 
	(respectively, ${\cal V}(v,M)$) of $\cal C$ such that for any two $\tau$-structures $\cA_1$, $\cA_2$ which coincide on $vocab(M_i)$,
	we have  $\cA_1 \in {\cal V}(v,M_i)$ iff  $\cA_2 \in {\cal V}(v,M_i)$. 
	The same condition applies to module variables $Z_i$.
	}

\begin{remark} 
	 Valuations $\cal V$ (parameterized with $v$) can be viewed as ``oracles'' or decision procedures associated with modules,
	and can be of arbitrary computational complexity.
\end{remark}

\subsection{Semantics of the ``Flat'' Algebra}

The {\em extensions} $\llbracket E \rrbracket ^{{\cal V},v}$ of algebraic expressions $E$ are subsets of $V$ (the set of all  $\tau$-structures over the domain fixed by $\cal V$)
defined as follows.
$$
\begin{array}{l}
\llbracket \bot \rrbracket ^{{\cal V},v} : =  \varnothing.\\
\llbracket M_i\rrbracket ^{{\cal V},v} : = {\cal V} (v, M_i) \ \mbox{for some}\  v.\\

\llbracket Z_j\rrbracket ^{{\cal V},v} : = {\cal V} (v, Z_j) \ \mbox{for some}\  v.\\

\llbracket E_1  \cup  E_2\rrbracket ^{{\cal V},v} : = \llbracket E_1\rrbracket ^{{\cal V},v} \cup  \llbracket E_2\rrbracket^{{\cal V},v}.\\


\llbracket  -E \rrbracket^{{\cal V},v}  : =   V \setminus  \llbracket E\rrbracket ^{{\cal V},v}.\\

\llbracket \pi_\delta(E) \rrbracket ^{{\cal V},v} : = \{   \cA  \  \ |\   \exists {\cA}' \ (
\cA' \in \llbracket E\rrbracket ^{{\cal V},v} \mbox{ and }  {\cA}|_\delta={\cA}'|_\delta \      ) \}.\\

\llbracket \sigma_{L_1\equiv L_2} E \rrbracket ^{{\cal V},v} : = \{   \cA  \  \ |\  \llbracket E\rrbracket^{{\cal V},v}\  \mbox { and } \ L_1^{\cA} = L_2^{\cA} \}.\\

\llbracket \mu Z_j.E \rrbracket ^{{\cal V},v}  : = \bigcap \big\{ {\cal E} \subseteq  {\cal C}  \ | \ \llbracket E\rrbracket ^{{\cal V} [Z:={\cal E}],v}\subseteq {\cal E} \big\} .\\
\end{array}	
$$
Here, ${\cal V} [Z  {:=} {\cal E} ]$ means a valuation that is exactly like  ${\cal V}$ except $Z$ is interpreted as $\cal E$.
Note that projection restricts each structure $\cB$ of $M$ to  $\cB|_\delta$ leaving the interpretation of other symbols open. Thus, it {\em increases} the number of models. Selection {\em reduces}  the number of models.

This algebra may look very different from Codd's algebra because all modules are sets of $\tau$-structures, that is, sets of tuples 
of the same length, while in Codd's algebra the length of tuples varies. 
There is no contradiction here --  Codd's algebra is just what is seen through the ``window of variables''.

\begin{definition}[Satisfaction, ``flat'' algebra]
	Given a well-formed algebraic expression $E$ defined by (\ref{eq:algebra}), we say that structure $\cal A$ {\em satisfies} $E$ under 
valuation $({\cal V},v)$, notation 
$${\cal A} \models_{({\cal V},v)}  E,\ $$
 if ${\cal A} \in \llbracket E \rrbracket ^{{\cal V},v}$. 
\end{definition}


\ignore{ Intuitively, to check whether a $\tau$-structure $\cA$ satisfies module $M_i$, we need to first select predicate symbols $S_{i_1}\zd S_{i_k}$ from $\tau$, whose arities match those of $X_1\zd X_k$, which is done by function $v$, and then ``apply'' the module to $S^\cA_{i_1}\zd S^\cA_{i_k}$,
 as we would apply a decision procedure. 
}

\begin{remark}
	Note  that while individual modules are already capable of solving optimization tasks (the optimum value can be given 
	as an output in one of the arguments),
	the least fixed point construct can generate the least value over a collection of modules combined in an algebraic expression.
\end{remark}

\ignore{\here{In proposition 1 it is mentioned that “-“ represents negation. However, various references to negation are done throughout the paper, and it would be good to understand right upfront which kind of negation is actually tackled.
}	}

\subsection{Representation in Logic}

 \begin{proposition}[Logic Counterpart]
 	The  algebraic  operations are equivalently representable in logic, where `$ \cup$' corresponds to 
 	disjunction, `$-$' to negation, `$\pi_{\nu}$' to second-order existential quantification over $\tau \setminus \nu$, `$ \sigma_{\Theta}$' to conjunction with $ \Theta$, $\mu Z.E $ to the least fixed point construct.
 \end{proposition}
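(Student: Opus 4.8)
The plan is to prove the stated operation-by-operation correspondence by exhibiting a translation $E \mapsto \phi_E$ from algebraic expressions into the lifted logic (the ``SO$+$LFP$-$FO'' fragment described above) and then verifying, by structural induction on $E$, that for every valuation $(\mathcal{V},v)$ and every $\tau$-structure $\mathcal{A}$ we have $\mathcal{A} \models_{(\mathcal{V},v)} E$ if and only if $\mathcal{A}$ satisfies $\phi_E$ under the same module interpretation. Since satisfaction is defined as membership $\mathcal{A} \in \llbracket E \rrbracket^{\mathcal{V},v}$, it suffices to match, clause by clause, the set-theoretic definition of $\llbracket \cdot \rrbracket^{\mathcal{V},v}$ with the Tarskian semantics of the corresponding connective. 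The free relational variables of $\phi_E$ are exactly the symbols of $\tau$ that $v$ assigns to the variable vocabularies occurring in $E$, so that the ``window of variables'' is respected throughout.

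First I would dispatch the base and propositional cases. The atom $M_i$ translates to the atomic formula $M_i(\bar{X})$, whose models are precisely $\mathcal{V}(v,M_i)$ by the definition of a valuation; the module variable $Z_j$ translates to a relation/fixed-point variable interpreted by $\mathcal{V}(v,Z_j)$; and $\bot$ translates to a contradictory formula with empty model class. Then $\llbracket E_1 \cup E_2 \rrbracket = \llbracket E_1 \rrbracket \cup \llbracket E_2 \rrbracket$ matches the disjunction $\phi_{E_1} \lor \phi_{E_2}$, and $\llbracket -E \rrbracket = V \setminus \llbracket E \rrbracket$ matches the negation $\neg \phi_E$, because membership in a union (respectively complement) of model classes is the disjunction (respectively negation) of the membership conditions.

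The two clauses that require care are projection and selection. For $\pi_\delta E$, the definition declares $\mathcal{A}$ a model exactly when it agrees on $\delta$ with some model $\mathcal{A}'$ of $E$; equivalently, $\mathcal{A}$ can be re-interpreted on the symbols \emph{outside} $\delta$ so as to satisfy $\phi_E$. This is precisely the meaning of existentially quantifying the second-order variables for $\tau \setminus \delta$, which is why projection onto $\nu$ corresponds to second-order $\exists$ over $\tau \setminus \nu$, and also explains the earlier remark that projection enlarges the model class. For $\sigma_{L_1 \equiv L_2} E$, the definition imposes membership in $\llbracket E \rrbracket$ together with $L_1^{\mathcal{A}} = L_2^{\mathcal{A}}$; the latter is exactly the atom $\Theta$ asserting equality of the two relations, so selection matches the conjunction $\phi_E \wedge \Theta$.

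I expect the fixed-point clause to be the main obstacle, although it is standard once set up. Here the positivity requirement on $Z_j$ in grammar (\ref{eq:algebra}) is essential. By a side induction on $E$ I would show that the operator $\mathcal{E} \mapsto \llbracket E \rrbracket^{\mathcal{V}[Z:=\mathcal{E}],v}$ is monotone on the powerset lattice of $\mathcal{C}$, the only delicate point being that $Z$ occurs under an even number of complementations so that negation never reverses its polarity. Monotonicity lets me invoke the Knaster--Tarski theorem, whose least fixed point equals the intersection of all pre-fixed points $\bigcap\{\mathcal{E} \mid \llbracket E \rrbracket^{\mathcal{V}[Z:=\mathcal{E}],v} \subseteq \mathcal{E}\}$; this is verbatim the algebraic definition of $\llbracket \mu Z_j. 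E \rrbracket^{\mathcal{V},v}$ and coincides with the semantics of the logical $\lfp$ operator. Finally, to upgrade the one-directional translation to a genuine equivalence, I would observe that each clause above is invertible, so reading the same induction backwards expresses every logical connective by its matching algebraic operation.
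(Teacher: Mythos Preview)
Your proposal is correct and is precisely the argument one would expect: a structural induction on $E$ verifying that each semantic clause for $\llbracket\cdot\rrbracket^{\mathcal{V},v}$ coincides with the Tarskian meaning of the corresponding logical connective, with Knaster--Tarski invoked for the fixed-point case after establishing monotonicity from the syntactic positivity of $Z_j$.

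The paper, however, does not supply a proof of this proposition at all; it is stated as an observation and immediately followed by an illustrative example. The proposition is treated as essentially self-evident once the semantics of the algebra has been written down, since each defining clause (union of extensions, complement in $V$, restriction-then-existential-witness for projection, conjunction of membership with $L_1^{\mathcal A}=L_2^{\mathcal A}$ for selection, intersection of pre-fixed points for $\mu$) is already the standard set-theoretic reading of the corresponding logical operator. So your write-up is not a different route so much as a fully spelled-out version of what the paper leaves implicit; the monotonicity side-induction and the explicit invocation of Knaster--Tarski are the only genuinely additional content, and they are the right way to justify the $\mu$ clause rigorously.
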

 \begin{example} 
 	{\rm 
 	Expression (\ref{3-Col-HC}) for $M_{\rm 2Col-HC}(V,X,Z,T) $ is represented in logic  as  
 	$$
 	\begin{array}{l}
 	\hspace{-5mm}	 \exists Y  [M_{\rm HC}(V,X,Y) \land 	M_{\rm 2Col}(V,Y,Z,T)].	 \end{array}
 	$$

}	
 	
 \end{example}

 Note that first-order variables can  be mimicked with second-order variables over singleton sets.
 \begin{proposition}
 	When all relations are unary and the sets that interpret them are are singletons,  the formalism collapses to  FO(LFP).
 \end{proposition}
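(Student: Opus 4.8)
The plan is to exhibit a tight correspondence between the restricted objects and ordinary first-order data, and then to read off the collapse operation by operation, using the Logic Counterpart proposition that already translates the algebra into its ``lifted'' logic SO$+$LFP$-$FO. The key observation is that a unary relation constrained to be a singleton $\{a\}$ over the fixed domain is in canonical bijection with the domain element $a$ it contains. Under this bijection a relational variable $X$ that ranges only over singleton unary relations is nothing but a first-order variable $x$ ranging over domain elements, and an atomic module symbol $M_i(X_{i_1},\dots,X_{i_k})$ whose every argument is such a variable is interpreted by $\cal V$ as a set of $\tau$-structures depending only on the elements named by $X_{i_1},\dots,X_{i_k}$; this is exactly a $k$-ary first-order relation $R_i(x_{i_1},\dots,x_{i_k})$. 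So first I would fix this identification of singleton-unary valuations with ordinary first-order variable assignments and with a first-order vocabulary $\{R_i\}$.

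Next I would check that each algebraic construct degenerates to the intended FO(LFP) construct under the identification, invoking the Logic Counterpart proposition. Union and complementation already correspond to disjunction and negation and are unaffected by the restriction. For projection, the Logic Counterpart gives second-order existential quantification over the symbols in $\tau\setminus\delta$; but an existential quantifier over a relational symbol forced to denote a singleton unary relation is, after the bijection, precisely a first-order existential quantifier $\exists x$ over the element it names, so $\pi_\delta$ becomes first-order projection. For selection $\sigma_{L_1\equiv L_2}E$ the semantic condition is $L_1^{\cA}=L_2^{\cA}$; when both sides are singletons this equality of unary relations holds iff the two named elements coincide, so selection becomes conjunction with the first-order equality atom $x_1=x_2$ (and, when one side is a constant relation $R=\{c\}$, with the equality $x=c$ to a constant). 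Thus every operation other than the fixed point lands inside ordinary first-order logic with equality and constants.

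Finally I would treat $\mu Z.E$ and argue closure. The positivity requirement on $Z$ already matches the positivity requirement of FO(LFP), and the least fixed point is defined as the intersection of all prefixed points in a complete lattice; restricting $Z$ to range over sets of singleton-unary structures keeps this lattice, under the bijection, isomorphic to the powerset lattice of $k$-tuples of domain elements on which the first-order LFP operator acts, so the two fixed points coincide. The step I expect to be the main obstacle is precisely this closure argument: one must verify that the singleton-unary restriction is preserved by all the operations (so that the induction on expression structure stays inside the fragment and the lattice identification is legitimate), and in particular that projection, which in general enlarges the set of models by leaving symbols open, still yields only singleton-unary witnesses when applied within the fragment; this is where the correspondence must be stated and checked with care rather than waved through. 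Once closure and the operationwise match are in place, a routine induction on the structure of $E$ gives that $\llbracket E\rrbracket^{{\cal V},v}$, read through the bijection, is exactly the relation defined by the corresponding FO(LFP) formula, establishing the collapse.
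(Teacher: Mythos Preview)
The paper does not actually supply a proof of this proposition: it is stated immediately after the one-line remark ``first-order variables can be mimicked with second-order variables over singleton sets'' and left at that. Your proposal is therefore not just consistent with the paper's approach, it \emph{is} the argument the paper gestures at, worked out in full. The bijection singleton-unary-relation $\leftrightarrow$ domain element is exactly the intended content of that remark, and your operation-by-operation check via the Logic Counterpart proposition, together with the lattice-isomorphism argument for $\mu$, is the natural way to cash it out.

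One small comment: the closure worry you flag for projection is less of an obstacle than you suggest. The restriction ``all relations are unary and interpreted by singletons'' is a hypothesis on the ambient class of $\tau$-structures (equivalently, on the valuation $\cal V$), not something that needs to be preserved by the algebraic operations themselves; projection enlarges the set of models, but those additional models are still drawn from the same restricted universe $V$ of singleton-unary $\tau$-structures. So the induction goes through without a separate closure lemma.
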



\ignore{

 Formally, a (concrete) module is a {\em class of structures}, however,
as we shall see such understanding is too restrictive by formal reasons. Therefore we introduce
\emph{abstract} modules that can be interpreted as a concrete one by fixing the vocabulary.

}

\ignore{
\begin{remark}
One may argue that selection should not be an atomic operation but be represented by a module. This is certainly a possibility.
The oracle for that new module would be a SAT solver.
The situation here is similar to the situation with equality. One can either use 
equality as ``built-in'' or 
use the axioms of equality and run a theorem prover.  

\end{remark}

  \begin{example}
 The following algebraic expression specifies a new module that verifies if a given structure with a certain binary relation is a graph that is a cycle.
 $$
 M_{\rm Cycle}( X) \ :=M_{\rm HC}(X,X).
$$
 
  \end{example}

 \begin{remark}	
Note that, to keep our presentation simple, we do not explain quantification (or projection) over object variables (those ranging over 
domain elements). Mathematically, such quantification is equivalent to second-order quantification over one-element sets,
and can be mimicked by the second-order quantification we already have. {\tt We can see an example in Recursion section.}
In a more elaborate presentation of the formalism,  quantification over object variables would be explained explicitly. 
Note that, similarly to relational variables, free object and function variables can also be used for communication between modules.
 \end{remark}

\begin{remark}
Note  that while individual modules are already capable of solving optimization tasks (the optimum value can be given 
as an output in one of the arguments),
the least fixed point construct can generate the least value over a collection of modules combined in an algebraic expression.
 \end{remark}
 
}

\section{Model Expansion,  Related Tasks} \label{sec:MX-Related-Tasks}

Model expansion \cite{MT05-long} is the task of expanding a structure to satisfy a specification (a formula in some logic).

\ignore{
The reader who wants to get to our main result as quickly as possible should now read subsection \ref{subsec:BinaryRelations} and go directly to Section \ref{sec:Algebra-Inf-Flow} and then \ref{sec:ModalRobustDecidability}. The reader who wants to understand our claim that the satisfiability problem is overrated, in particular in the attempts to explain the robust decidability of modal logics, should read the  subsections in detail.
}

\subsection{ Three Related Tasks: Definitions}
For a formula $\phi$ in any logic $\cal L$ with model-theoretic semantics, we can associate the following three tasks
(all three for the same formula), satisfiability (\SAT), model checking (\MC) and model expansion  (\MX).
We now define them for the case where $\phi$ has no free object variables.

\begin{definition}[{\sf Satisfiability (SAT$_\phi$)}]
	\underline{Given:} \\ Formula $\phi$.
	\underline{Find:} structure $\strB$
	such that  $\strB\models \phi$. (The decision version is: \underline{Decide:} $\exists \strB$ s.t. $\strB\models \phi$?)
	
\end{definition}


\begin{definition}[{\sf Model Checking (\MC)}] \label{def:MC}
		\underline{Given:}\\ Formula $\phi$, structure $\strA$ for $vocab(\phi)$.
		\underline{Decide:}  $\strA \models \phi$?
	\end{definition}
There is no search counterpart for this task. 

The following task (introduced in  \cite{MT05-long}) is at the core of this paper.
The decision version of it is of the form ``guess and check'', where the ``check'' part 
is the model checking task we just defined. 
\begin{definition}[{\sf Model Expansion (MX$_\phi^{\sigma}$)}] \label{def:MX}
	\underline{Given:} \\ Formula  $\phi$ with designated input vocabulary $\sigma\subseteq vocab(\phi)$ and
	$\sigma$-structure $\strA$.
	\underline{Find:} structure $\strB$ such that 
	 $\strB\models \phi$ and expands
		$\sigma$-structure $\strA$ 	to $vocab(\phi)$.
	(The decision version is: \underline{Decide:} $\exists \strB$ such that  $\strB\models \phi$ and expands $\sigma$-structure $\strA$ 
		to $vocab(\phi)$?)
\end{definition}
Vocabulary $\sigma$ can be empty, in which case the input structure $\strA$ consists of a domain only. 
When $\sigma = vocab(\phi)$, model expansion collapses to model checking, $\mbox{\sf MX}_\phi^{\sigma} = \mbox{\sf MC}_\phi$.
Note that, in general, the domain of the input structure in \MC{\ }and \MX{\ }can be infinite. 

 Let $\phi$ be a sentence,
i.e., has no free object variables.
Data complexity \cite{Var82} is measured in terms of the size of the finite active domain. Data complexity of  \MX{\ }lies in-between model checking (full structure is given) and satisfiability (no structure is given).
$$ \mbox{\sf MC}_\phi \leq \mbox{\sf MX}_\phi^{\sigma} \leq \mbox{\sf SAT}_\phi. 
$$
Of course, we consider the decision versions of the problems here. For example, for FO logic, \MC {\ }  
is non-uniform AC$^0$, \MX {\ }  captures NP (Fagin's theorem),
and \SAT  {\ }  is undecidable. In  \SAT, the domain is not given. In \MC {\ } and \MX,  the (active) domain is always given, which significantly reduces 
the complexity of these tasks compared to \SAT.
The relative complexity of the three tasks for several  logics  has been studied in \cite{Kolokolova:complexity:LPAR:long}.

\ignore{
{\em Model expansion for an expressive logic  and satisfiability for a corresponding (and much less expressive) logic are connected in the same way as Fagin's and Cook's theorems are connected (see, e.g. page ??? of \cite{Libkin04:book-long}), through a step that in practical systems is called {\em grounding}, that amounts to instantiating variables with constants that correspond to the domain of the instance structure. }
For example, this step is implemented in ASP systems e.g. Clasp \cite{clasp}. \blue{Enfragmo} \footnote{In the ASP literature, inputs are given as sets of ground atoms,
	and the semantics is discussed in terms of Herbrand structures, which obscures the separation of problem instances (structures, databases, a semantical notion here) from problem descriptions (axiomatizations of combinatorial problems).} 
}

\subsection{Similar Task: Query Evaluation}
In database literature, a task similar to \MX is query evaluation.
We define it for non-boolean queries, i.e., those with free object variables $\by$.

\begin{definition}[{\sf Query Evaluation (\QE)}]\label{def:QE} {\  } \\
	\underline{Given:} Formula  $\phi(\bar{x})$ with free object variables  $\bar{x}$, 
	tuple of domain elements $\bar{a}$ and
	$\sigma$-structure $\strA$, where $\sigma= vocab(\phi)$.
	\underline{Decide:}  $\strA \models \phi[{\ba}/\bx]$? 
\end{definition}

The task specifies the
combined complexity of query evaluation. 
Data complexity  requires formula $\phi$ to be fixed. To analyze expression complexity, we 
fix the database \cite{Var82}.

For {\em the same logic}, these two tasks, \MX{\ }and \QE, (as we defined them)  have very different data complexity,
e.g., for first-order logic, query evaluation is in AC$^0$, model expansion is NP-complete. 

Database query $\phi$ can be viewed 
as  a {\em relation definable with respect to a structure} (a database). See, e.g., the classic textbook by Enderton \cite{Enderton}.
The action of defining such a relation is always deterministic -- it defines one relation.
This relation is a set of tuples, it is not a part of the vocabulary.

\ignore{
\begin{definition}[{\sf Query Evaluation (\QE)}]\label{def:QE}
	\underline{Given:}\\ (1) Formula  $\phi(\bX,\by)$, where  $\bX$ are the designated relational input variables, $\by$ are the designated object output variables, (2) mapping $v: V \mapsto \tau$ of relational variables $V$ to 
	some vocabulary $\tau$, 
	$v(\bX) = \sigma$,  
	where  $\sigma \subseteq vocab(\phi) \subseteq \tau$ and
	(3)	a  $\sigma$-structure $\strA$, where $\strA =(A; \bR) $.\\
	\underline{Compute:} 	$ \{\bb  \ | \ \mbox{ $\bb$ is a tuple of domain elements, 
	 $|\bb| = |\bx|$,}\\ 	
	\mbox{ and  $\strA\models_v \phi(\bR,\bb )$ } \}$. 
		(The decision version is: \underline{Decide:} $\exists \bb$ s.t.  $\strA \models_v \phi( \bR,  \bb) $ ?)
\end{definition}
}

\ignore{
\subsubsection*{Related Tasks: Model Checking for Formulae with Free Object or Free Set Variables}

Definition \ref{def:MC} could be reformulated as: 

\begin{definition}[ \hspace{-3mm} \MCx, free object variables]
	\underline{Given:} $\psi(\bx)$,  tuple of domain elements $\ba$, and 	$\sigma$-structure $\strA$, where $\sigma= vocab(\psi)$
	(as in Definition \ref{def:QE}).
	\underline{Decide:}  $\strA \models \psi[\ba]$?
\end{definition}

For formulae with free relational variables (no free object variables), Definition \ref{def:MC} could be reformulated as 
\begin{definition}[\MCX, free set variables] {\ }\\
\underline{Given:} Formula  $\phi(\bar{X})$ with free relational variables  $\bar{X}$, each 
	ranging over domain elements, tuple of relations $\bar{R}$, and
	$\sigma$-structure $\strA$, where $\sigma= vocab(\phi)$.
	\underline{Decide:}  $\strA \models \phi[\R]$? \blue{notation does not work in the proposition}
\end{definition}

}


\ignore{For finite structures, query evaluation (for queries with free object variables) is reducible to model checking, but not necessarily in general.  
}


\subsection{Model Expansion  as a Binary Relation}\label{subsec:BinaryRelations}

When we talk about powerful solvers (such as those that allow us to compute, say, 3-Colourability or to come up with solutions to logistics problems),
we usually produce multiple solutions. In data graphs, we may have multiple children of the same data node. 
Later in this paper, we will talk about a transition system with non-deterministic actions and a calculus of binary 
relations. Model expansion (as in Definition \ref{def:MX}, the Find part) gives us  such a binary relations on structures. 

\begin{example} {\rm 
		 3-Colourability is a binary relation such that $(\strA,\strB)$ is in this relation if and only if $\strA$ contains an interpretation of the edge and vertex relations (that is, a particular graph), and  $\strB$ contains an interpretation of relational symbols $R$, $B$, $G$ that represents a particular correct colouring of this graph.  Note that both $\strA$ and $\strB$ may also interpret other relational symbols, but those interpretations do not matter for  this relation. Those extra things may be called garbage of the computation.}
	\end{example}


\ignore{
\begin{example} {\rm 
		In this example, we consider a security application that is intended to prevent organized crime. 
		Structures $\strA$ and $\strB$ represent police databases. Police follows suspicious groups of people that 
		call themselves Angels, Crows, Professors, etc. Pair  $(\strA,\strB)$  is in binary relation Influences if 
		database  $\strA$ contains a group of people who influences some group of people in database  $\strB$.
		
		 }
\end{example}
}

\ignore{
\begin{definition}[\MX] {\ }\\
	\underline{Given:} Formula  $\phi(\bar{X})$ with free relational variables  $\bar{X}$, each 
	ranging over domain elements, tuple of relations $\bar{R}$, and
	$\sigma$-structure $\strA$, where $\sigma= vocab(\phi)$.
	\underline{Decide:}  $\strA \models \phi[\R]$? 

\end{definition}

\subsubsection*{Model Expansion versus Query Evaluation}

As a side remark, note that we could have defined a more general version of QE,
	where the variables $\bx$ would allowed to be SO. Then MX would be a particular case of this extended notion of query evaluation. 
 However, that would  contradict the definition of a query e.g. in Immerman's book \cite{Immerman-book} (see page 17) as a mapping from 
 structures in one vocabulary to structures in another vocabulary that is polynomially bounded. The size of the set 
 $\mbox{\MX}(\strA)$  {\tt NOTATION }??????????? can be
 exponential in the size of $\strA$.
}


\subsection{The  Four Tasks in Applications} 

In database research, \QE{\  }has already been  studied extensively, e.g.   for first-order logic (Codd's relational algebra) 
and its fragments (e.g. for  conjunctive queries), as well as for DATALOG and its variants.
\SAT{\  }has demonstrated its practical importance mostly for propositional logic, a logic of a very low expressive power.
Indeed,  the success in SAT solving (achieved both by smart algorithms and an exponential growth in hardware capabilities)
is one of the most remarkable achievements of logic in Computer Science. Due to this success, the complexity class NP
is often called ``the new tractable''. However beyond the propositional case, the great majority of logics that are interesting and useful in Computer Science 
are undecidable. For instance, first-order logic is undecidable even in the finite (by the Trakhtenbrot's theorem).
In addition, integration of theories often presents as a major problem in Knowledge Representation and in  Satisfiability Modulo Theory because a combination of 
two theories is often undecidable. However, in practice, a finite domain is often given on the input, and, in such a case, the undecidability problem
for combinations of theories 
does not arise. 

 
 Moreover, systems for logics with a very high complexity of satisfiability, often  perform very well in practice. 
 The explanation is that  those systems solve \MX, not \SAT, since an (active) domain is a part of the input. 
While \SAT{\ }continues to be important for propositional logic, the importance of this task for expressive  logics used in practice
is {\em greatly overrated}. 
Just as the query evaluation problem is prevalent in database research, model expansion is very common in the general area of constraint solving.
Most constraint solving paradigms solve \MX{\ }as the main task, e.g. 
logistics, supply chain management, etc.
Java programs, if they are of input-output type, can be viewed as 
model expansion tasks, regardless of what they do internally. Most combinatorial problems are of that form, many software programs (e.g., web services) and hardware devices (e.g., circuits) are of that form.  The Logistics Service Provider in Example  \ref{ex:LSP} has, e.g., customer requests
as an input, and routes and packing solutions as  outputs.
In Example \ref{ex:HC-2Col}, one can 
have e.g. edges of a graph on the input 
to formula  (\ref{3-Col-HC}), and colours on the output.  ASP systems, e.g., Clasp \cite{clasp} mostly solve model expansion, 
and so do CP languages such as Essence \cite{Essence}, as shown in \cite{MT-essence-journal:2008}. Problems solved in ASP competitions are mostly 
in model expansion form.
CSP in the traditional AI form (respectively, in the homomorphism form) is representable by model expansion where mappings to domain elements (respectively, homomorphism functions) are expansion functions. 

In the algebra we  present next, we can view each algebraic expressions  as a 
{\em network of inter-connected solvers and databases}, 
jointly solving one task:
satisfy all the components. 
	We deal with two types of objects: 
\begin{compactitem}
	\item Static Objects:  model checking modules $M_p$:
	\begin{compactitem}
		\item collections of databases 
		\item 	{\em decision} procedures of any kind and complexity, e.g. is a given graph 3-colourable?
		\item relations or collections of objects
	\end{compactitem}
	\item Dynamic Objects: model expansion modules $M_a$:
	\begin{compactitem}
		\item actions, changes 
		\item combinatorially complex {\em  search} and {\em optimization} problems, e.g. planning, scheduling, TSP
		\item any binary relations on (sets of) structures
		\item roles  in Description Logics 
		\item data links in graph databases
		\item causality links
	\end{compactitem}
\end{compactitem}

\ignore{
Almost all  non-trivial commercial software systems use libraries of reusable components.
Component-based engineering is also widely used in VLSI circuit design, supported by a large number
of libraries.
The theory of combining conventional imperative programs and circuits is relatively well-developed. 
However, in knowledge-intensive computing, characterized by using so-called declarative programming, 
research on {\em programming from  reusable components} is not very developed\footnote{Related work that is most relevant is discussed at the end of the paper.}. 
 It would  be very desirable to be able to take a program written in Answer Set Programming (ASP), combine it (say, as a non-deterministic choice)  with a specification of a Constraint Satisfaction Problem (CSP), and 
then, sequentially, with an Integer Linear  Program (ILP), and send  feedback as an input to one of the first two programs. Such a programming method,
from existing components, possibly found on the web, would be extremely useful.
The main challenge however
is that the programs may be written in different languages (even legacy 
languages), and rely on different solving technologies.



Recently, there has been a lot of work on technology integration. Examples include but are not limited to 
\cite{DBLP:journals/amai/MellarkodGZ08,DBLP:conf/lpnmr/BalducciniLS13,DBLP:conf/cpaior/0001KMO14,ASP-CP-combination,Picat}. 
Combined solving is perhaps most developed in the SMT community, where theory propagations are tightly interleaved with satisfiability solving
\cite{DBLP:journals/jacm/NieuwenhuisOT06,SMT}.
 Declarative and imperative types of programming sometimes have to be combined for best results.

Solving methods often rely on  {\em decomposition} which are also studied in knowledge representation. 
A method for large knowledge base decompositions is developed by E. Amir and  S. McIlraith   \cite{DBLP:journals/ai/AmirM05}. Tree decompositions
are used to tame complexity in the work by  S. Woltran and colleagues \cite{DBLP:conf/jelia/AbseherBCDHW14,CharwatWoltran:LPNMR:2015}. 
Several similar notions of  {\em tightening - relaxation},  {\em abstraction - refinement}, as well as  several notions of 
{\em equivalence} are used in 
constraint solving and hierarchical program development.  




For efficient solving, {\em special propagators} are identified and either implemented separately or integrated tightly into the main 
reasoning mechanism.
For example, acyclicity \cite{DBLP:conf/ecai/GebserJR14} is added to SAT and ASP as a special propagator. 
BDDs are also used as special (and more powerful) propagators in hybrid constraint solving 
\cite{DBLP:journals/jair/GangeSL10}.
Experts can identify special sub-problems, which might still be combinatorially hard, but  the solutions of which can help to solve the main 
problem. 
It is desirable to specify domain-specific  propagators declaratively. A method for writing such specifications 
is proposed in our recent work \cite{TT:LaSh:2014}. 




Various kinds {\em of meta-reasoning} are used on top of knowledge-intensive methods.
Meta-reasoning (high-level control) is used, for example, in {\sf dlvhex}, a nonmonotonic logic programming system aiming at the integration
of external computation sources and higher-order atoms \cite{dlvhex}. 
Sophisticated control is used in \cite{DBLP:journals/jair/CatDBS15}, where the authors
 develop a method for lazy model
expansion, where the grounding is generated lazily (on-the-fly) during search.
The authors  of  \cite{bootstrapping-LASH} use bootstrapping for the IDP system, where inference engine itself is used to tackle some tasks solved inside a declarative  system declaratively.



 However, practical constraint solving remains a complex task.
 K. Francis noted, in \cite{Francis:ModRef:2014}:
``The results of the 2013 International Lightning Model and Solve competition
served as a reminder that our tools are too hard to use. The winners solved the
problems by hand, suggesting that creating an effective model was more time
consuming (or daunting) than solving the problems.''


Several research questions still need to be answered to bring solving and development technologies to the next level.

\begin{enumerate}

\item  The technologies are diverse, some use declarative representations, some are purely imperative. 
What is common in all those advanced technologies, what would be a basis for integration?
	What are the units we combine, decompose into? 
	
\item	How can we combine those units? 
	The language of combinations should be both expressive and feasible computationally.

\item  What is the notion of a solution, how can such modular systems  be solved? 

\item  What is (are) reasonable notion(s) of equivalence, modular system containment? 

\end{enumerate} 

Our main goal is to provide a mathematical basis for combined constraint solving, and for specifying high-level control of such solving. 

	
 We believe that integration of technologies, from separate communities {\em cannot be done on the basis of  one language}. 
Communities that  develop technologies are attached to their languages. Experts usually get training (often during their PhD study) in one technology, and 
tend to keep applying it. It would be a utopia to assume that everyone would eventually switch to using a common representation language. 
It is also impractical to assume that axiomatization of each module would be translated to one language for solving. 
In such a translation, language-solver-specific  efficiency adjustments in axiomatizations would be lost.
Thus, a  language-independent way of integration is needed. Since \cite{TT:FROCOS:2011-long} we have been arguing that 
model theory is a good basis for integration in a language-independent way.

In this paper, we continue the line of research we started in  \cite{TT:FROCOS:2011-long} where 
Modular Systems  were introduced. They were further developed in \cite{Shahab-thesis,Newman-thesis,TWT:WLP-INAP:2012,TT:NMR:2014,TT:KR:2014,MT:LPNMR:2015,ET:IJCAI:2015,TT:HR:2015}.
{\em The main conceptual shift proposed in those works was to take model-theoretic approach seriously}.   
Due to the model-theoretic view, the formalism is language-independent, and combines modules  specified in {\em arbitrary 
	languages}.

 A formalism for combining such programs should allow for arbitrary languages, and, at the same time, should lay foundations 
 for solving combined systems,  employing the solving power of solvers for  languages used in the system.
It should provide {\em heterogeneity in the syntax}, 
{\em homogeneity in the semantics}. Heterogeneity in the syntax is important because modules can be represented 
in different languages. Homogeneity in the semantics is crucial for developing efficient solving algorithms.
The first question is  what a {\em module} is. In addition,
 a {\em small but expressive} set of operations for combining modules  is needed. 
The formalism  should have {\em controlled expressiveness.}
On one hand, it should have {\em enough  expressive power} to be able to produce interesting and useful combinations of modules,
provide guarantees to be able to represent any problem, if its complexity is within the power of the language.
On the other hand, the formalism should be {\em efficiently implementable}.
Note that there is a tension between these two requirements. An increase in expressive power 
comes at the expense of efficiency.  
Finally, the formalism should contain a small number of operations, but sufficient  for specifying high-level control.

In this paper, we define the notion of a module, which is language-independent, and describe an algebra for combining such modules, and consider its version with information propagation. We draw parallels with Codd's relational algebra, and 
process algebras originating from the work of Hoare and Milner. 
We discuss meta-solving, 
where the main solver is viewed as a  ``master'' communicating with ``slaves'',
individual modules that may have their own solvers associated with the  language of the module (e.g., CSP, ASP modules). We argue, from the complexity-theoretic 
perspective, that such meta-solving is possible, for an interesting fragment, with the current technology for solving problems in the complexity class NP.


Several notions of inclusion (abstraction-refinement) and equivalence are possible for our algebra. 
We discuss two, one has more set-theoretic flavour, and the other is more behaviour-based.

The paper contains two main parts. The first part, presented in Section \ref{sec:Algebra}, describes the general version of our algebra of modular
systems. The second part, presented in Section \ref{sec:Algebra-Inf-Flow}, defines a directional variant of the algebra,
where the information flow is specified.


}

\section{Algebra with  Information Flow}\label{sec:Algebra-Inf-Flow}

\ignore{
\subsubsection*{Modules as Non-Deterministic Operators}

\begin{SCfigure}[ ][h] \label{Figure:MS-transitions}
	\centering
	\begin{tikzpicture}[scale=0.62,transform shape]
	
	\node (R1A) {};
	\path (R1A)+(0.5,-3) node (R1B) {};
	\draw [fill=black!30] (R1A) rectangle (R1B);
	
	\path (R1A)+(0,-1) node (R1sA) {};
	\path (R1sA)+(0.5,-0.5) node (R1sB) {};
	\draw [fill=black!10] (R1sA) rectangle (R1sB);
	
	\path (R1A)+(3,0) node (R2A) {};
	\path (R1B)+(3,0) node (R2B) {};
	\draw [fill=black!30] (R2A) rectangle (R2B);
	
	\path (R1sA)+(3,-0.5) node (R2eA) {};
	\path (R1sB)+(3,-0.5) node (R2eB) {};
	\draw [fill=black!10] (R2eA) rectangle (R2eB);
	
	\path (R1sA)+(0.5,0) edge[-,dashed] node[sloped,anchor=center,below]{$M$} (R2eA);
	\path (R2eA)+(0,-0.5) edge[-,dashed] (R1sB);
	
	\path (R1sA)+(0,-0.25) node[right] {$\sigma$};
	\path (R2eB)+(0,0.25) node[left] {$\varepsilon$};
	
	\draw [decorate,decoration={brace,amplitude=10pt},xshift=-4pt,yshift=0pt] (0,-3) -- (0,0) node [black,midway,left,xshift=-0.5cm] {$\tau$};
	\draw [decorate,decoration={brace,amplitude=10pt},xshift=4pt,yshift=0pt] (3.5,0) -- (3.5,-3) node [black,midway,right,xshift=0.5cm] {$\tau$};
	
	\path (0.25,0) node [above] {$ \strB_1$};
	\path (3.25,0) node [above] {$ \strB_2$};
	\end{tikzpicture} 
	\caption{\small Modules as non-deterministic operators.  Each $\sigma\cup\varepsilon$-structure in module $M$  has interpretation of its $\sigma$ part on the left, and of $\varepsilon$ part on the right.}\label{fig:MS-state-change}
\end{SCfigure}

}







We now describe a transformation from the  ``flat'' algebra to the ``dynamic'' one, which gives rise to a modal logic. 
All we do is we {\bf add information propagation}.
Some atomic modules serve as propositions. They are unchanged by the transformation. Other become actions.
In each atomic action-module $M$, we \underline{underline} designated input symbols and denote them $\sigma_M$. Output symbols
are  those that are free (are not quantified) in the algebraic expression where $M$ occurs. They are denoted $\varepsilon_M$.
Thus, {\bf we  force  multi-dimensional expressions into binary}. For compound expressions $\alpha$, we use $\sigma_{\alpha}$ to denote
the union of all  input symbols that occur free, and $\sigma_{\varepsilon}$ to denote all output symbols of $\alpha$ that occur free.
\begin{example}
	{\rm 
	Consider again the  HC-2Col example:
	\begin{equation} \label{HC-2Col-MX}
	\exists Y  [M_{\rm HC}(\underline{V},\underline{X},Y) \land 	M_{\rm 2Col}(\underline{V},Y,Z,T)].
	\end{equation}
	The quantified symbol $Y$  is not visible from the outside. 	The output vocabulary of this compound modular system is
	$\varepsilon = \{Z,T\}$ (for the two colours),
	the input vocabulary is $\sigma = \{V,X\}$ (for the  vertices and edges). 
	In general, {\em any} direction	of information propagation can be specified. 
	For the external (free) symbols, a particular specification of inputs and outputs determines which problem 
	we are solving. For the internal symbols (those that are quantified), it does not matter which symbols are inputs, which ones are outputs.
	For instance, the internal symbol $Y$  can be considered as an input to the 
	second module, or it can be a symbol on the output who's value is guessed and checked to satisfy both modules.

}
\end{example}

\subsection{Minimal Syntax of the Dynamic Algebra}

Fix a  vocabulary of atomic module symbols $\tau_{ M}$.
Let  $\tau_{P}$, where  $\tau_{P} \subseteq \tau_{ M}$, be a vocabulary of atomic module symbols   where inputs are {\em not} specified.

We call them {\em propositions}. Alternatively, we can think of these modules as having outputs that are identical to the inputs.
Let $\tau_{act}$, where  $\tau_{act} \subseteq \tau_{ M}$, be a vocabulary of atomic module symbols $M_i(\underline{X_{i_1}}\zd X_{i_k})$,
where inputs  are underlined. We call them {\em actions}.
For one module symbol $M_i$, we can potentially have both a proposition, e.g. $M_i$,  and  several actions, depending on the choice of the inputs.

We define a calculus of binary relations  as follows. 
\begin{equation}
\label{eq:algebra-inf-flow}
\alpha :: = \bot |    M_i?   |  M_a  |   Z_j   |  \alpha \cup \alpha  |        - \alpha |   \pi_\delta \alpha   |   \sigma_{\Theta} \alpha     |       \mu Z_j.\alpha  
\end{equation}

\noindent Here,  $M_i$ are propositions, $M_a$ are actions. Notice that the operations  are exactly like in the first algebra.

 Variables   $Z_j$ range over actions.
As usual, we require that $Z_j$ occurs positively (under an even number of $-$) in $ \mu Z_j.\alpha $.
Requirements on $ \pi_\delta \alpha$  and $\sigma_{\Theta} \alpha $ are as in the ``flat'' algebra.

\ignore{The calculus is an extension of  BSFP logic
\cite{DBLP:conf/aiml/ZaidGJ14} with selection and projection. However we use \emph{modules} for both unary and binary relations,
     which makes the semantics much more complicated. The non-lifted setting is a special case.
    }


\subsection{Semantics of the Dynamic Algebra}

We interpret the calculus of binary relations over 
graphs with data points that are relational databases, or graph databases, or any other data structures representable using structures.
Our data graphs are transition systems.

\begin{definition}[{Transition system} ${\cal T}$] \label{def:T} Transition system 	$${\cal T} := (V; (M_a^{\cal T})_i, (M_p^{\cal T})_j)$$
(parameterized  by valuation  $({\cal V},v)$ defined above)
 has domain $V $ which is the set of all  $\tau$-structures over a fixed domain, which is given by ${\cal V}$, and it
interprets all
actions $M_a$  as subsets of $V \times V$ denoted by $M_a^{\cal T}$,
and all monadic propositions $M_p$ by structures (now nodes in the transition graph) $M_p^{\cal T} \subseteq V$. 
\end{definition}
Module variables $Z_j$ that occur free in $\alpha$ are interpreted as actions, i.e., as subsets  of $V \times V$.

As before, we require that for any two $\tau$-structures $ \strA_1$, $ \strA_2$ which coincide on $vvoc(M_i)$,
we have  $ \strA_1 \in {\cal V}(v,M_i)$ iff  $ \strA_2 \in {\cal V}(v,M_i)$.

Recall how valuations $({\cal V},v)$ work. First, $v$ maps relational variables 
to symbols of the vocabulary $\tau$.
Second, $ {\cal V}$, parameterized with $v$, provides an interpretation to each atomic module, which is a set of structures as before (i.e., a concrete module).
In particular,  $ {\cal V}$ also specifies a domain. 

\ignore{

Actions are the transitions in the transition system $\cal T$, and propositions are labels
of the states of $\cal T$.  Sequential composition $\circ$ and non-deterministic 
choice $+$ act as expected,  projection adds non-determinism, similarly to $+$, and selection 
restricts the action to that where the interpretations of $L_1$ and $L_2$ are equal (there are three cases, when both $L_1$ and $L_2$ are 
inputs, both are outputs, and one is input, one is output), with interpretations as we would expect. 
The semantics of   $ \mu Z_j.\alpha $ is exactly like that of the least fixed point operator
in the modal mu-calculus $L\mu$. 
}

We define the extension $\llbracket \alpha \rrbracket ^{{\cal T},{\cal V},v} $  of  formula $\alpha$ in $\cal T$ under valuation $({\cal V},v)$ inductively as follows. 

\newcommand{\PLH}{{\mkern-2mu\times\mkern-2mu}}
$$
\hspace{-5mm}
\begin{array}{l}

\llbracket \bot \rrbracket ^{{\cal T},{\cal V},v} : = \varnothing.\\
\llbracket M_i?\rrbracket ^{{\cal T},{\cal V},v} : = \{ (  \strB ,  \strB) \in V^{\cal T} \! \PLH  V^{\cal T} \  \ |\   \strB \in {\cal V} (v, M_i) \  \}.\\


 \llbracket M_a\rrbracket ^{{\cal T},{\cal V},v} : =  \{ (  \strB_1 ,  \strB_2) \in V^{\cal T} \! \PLH V^{\cal T}  \ |\  \  \strB_1|_{\tau\setminus {\varepsilon_{M_a}}} =  \strB_2|_{\tau\setminus \varepsilon_{M_a}} \\
   \mbox{ and }  
    \strB_2 \in {\cal V} (v, M_a) \   \}.\\


\llbracket \alpha_1 \cup  \alpha_2\rrbracket ^{{\cal T},{\cal V},v} : = \llbracket \alpha_1\rrbracket ^{{\cal T},{\cal V},v} \cup  \llbracket \alpha_2\rrbracket ^{{\cal T},{\cal V},v}.\\


    \llbracket -  \alpha_2\rrbracket ^{{\cal T},{\cal V},v} : =  V^{\cal T} \! \PLH V^{\cal T}  \setminus \llbracket \alpha\rrbracket ^{{\cal T},{\cal V},v} .\\


\llbracket \mu Z_j.\alpha \rrbracket ^{{\cal T},{\cal V},v}  : = \bigcap \big\{ R \subseteq  V^{\cal T} \! \PLH V^{\cal T} \ : \ \llbracket \alpha\rrbracket  ^{{\cal T},{\cal V}[Z:=R],v }\subseteq R \big\} .\\

\llbracket \pi_{\delta}(\alpha) \rrbracket ^{{\cal T},{\cal V},v} : = \{ (  \strB_1 ,  \strB_2)  \in V^{\cal T} \! \PLH V^{\cal T}   \ |\ \\
\exists  \strC_1 \exists  \strC_2  \ (( \strC_1 ,  \strC_2 ) \in \llbracket \alpha \rrbracket ^{{\cal T},{\cal V},v},
\strC_1|_\delta =  \strB_1|_\delta \  \mbox{ and }  \   \strC_2|_\delta= \strB_2|_\delta  ) \}.\\

\end{array}	
$$

$
\llbracket \sigma_{L_1\equiv L_2} (\alpha)  \rrbracket ^{{\cal T},{\cal V},v} : = 
 \{ 
(  \strB_1 ,  \strB_2)  \in V^{\cal T}  \! \PLH V^{\cal T} \ | \ \mbox{3 cases:} \\
	 $
 \begin{small}
$
 \begin{array}{l} 
\hspace{-7mm} \mbox{  1) }  \ ( \strB_1 ,  \strB_2 ) \in \llbracket \alpha \rrbracket ^{{\cal T},{\cal V},v}  \  \mbox{ and } \{L_1, L_2\}\subseteq \sigma_{\alpha} 
\  \mbox{ and }  \   \strB_1  \models L_1\equiv L_2     \\

\hspace{-7mm} \mbox{  2) }  \ ( \strB_1 ,  \strB_2 ) \in \llbracket \alpha \rrbracket ^{{\cal T},{\cal V},v}  \  \mbox{ and } \{L_1, L_2\}\subseteq \varepsilon_{\alpha}
\  \mbox{ and }  \   \strB_2  \models  L_1\equiv L_2   \\

\hspace{-7mm} \mbox{  3) }   L_1\in   \sigma_{\alpha} \mbox{ and }  L_2\in   \varepsilon_{\alpha}  \  \mbox{ and } \\

 \hspace{-5mm} \exists  \strC \  (\ ( \strC ,  \strB_2 ) \in \llbracket \alpha \rrbracket ^{{\cal T},{\cal V},v}  \  \mbox{ and } \  \  
\  \strB_1|_{\tau\setminus \{L_1\}} =   \strC|_{\tau\setminus \{L_1\} }  \  \mbox{ and } \ \\
( \strB_1 \models  L_1   \  \mbox{ iff }\   \strB_2 \models  L_2)\  ).  
\end{array}  
$
\end{small}

Here, $\models$ is the standard satisfaction relation as in the first-order logic.
\noindent Case 3 expresses  feedback from  output $L_2$ to  input $L_1$, similar to a feedback in boolean circuits, also used in \cite{TT:FROCOS:2011-long}.
 Notice that  $L_1$ is a new guessed symbol, so the number of models
in the third case may increase. Cases 1 and 2 potentially reduce the number of models.

\subsubsection{Satisfaction Relation for the Calculus of Binary Relations}
\begin{definition}[Satisfaction, Dynamic algebra] \label{def:satisfaction-binary}
	Given a well-formed algebraic expression $\alpha$ defined by (\ref{eq:algebra-inf-flow}), we say that transition system  $\cal T$ and pair of states 
	$( \strA,  \strB)$ {\em satisfy} $\alpha$ under 
	valuation $({\cal V},v)$, notation 
	$${\cal T}, ( \strA,  \strB) \models_{({\cal V},v)}  \alpha,$$
	if $( \strA,  \strB) \in \llbracket \alpha \rrbracket ^{{\cal T},{\cal V},v}$. 
\end{definition}

\subsubsection{Atomic Modules-Actions}\label{subsec:NonDetAction}
Here, we clarify the semantics of the atomic modules-actions. According to the semantics, atomic actions produce a replica of the current database except the interpretation 
	of the expansion (output) vocabulary changes as specified by the action. This is similar to the inertia law in the Situation Calculus and other formalisms  for reasoning about actions.
\begin{example}{\rm 
		To illustrate transitions using our examples, in (\ref{HC-2Col-MX}), first $M_{\rm HC}( \underline{V},\underline{X},Y) $ makes transition by producing 
		possibly several Hamiltonian Circuits. The interpretation of the output vocabulary, $\{Y\} $ changes,
		everything else is transferred by inertia.   Then each  resulting structure is taken as an input to 2-Colouring, $	M_{\rm 2Col}(\underline{V},Y,Z,T)$,
		where edges in the cycle, $Y $,  are ``fed'' to the second argument of  $	M_{\rm 2Col}$, although this is hidden from the outside observer by the existential quantifier in (\ref{HC-2Col-MX}).
		The second module  produces non-deterministic transitions, one for each  generated colouring.
	}
\end{example}

\subsection{Useful Operations}
We introduce some definable operations.

\subsubsection{ Basic set-theoretic operations and equivalence}
$$
\begin{array}{l}
\top := - \bot,\\
\alpha_1 \cap \alpha_2   : = - (- \alpha_1 \cup - \alpha_2) ,\\
\alpha_1 -\alpha_2 : = - (- \alpha_1 \cup \alpha_2), \\
\alpha_1 \equiv \alpha_2 : =  (- \alpha_1 \cup  \alpha_2) \cap (- \alpha_2 \cup  \alpha_1).
\end{array}
$$

\subsubsection{ Projection onto the inputs}
$$\downarrow\alpha\  : = \pi_{\sigma_{\alpha}} \alpha. $$

This operation is also called ``projection onto the first element of the binary relation''.
It  identifies the states in $V$ where there is an outgoing $\alpha$-transition.
Thus,

$$
\llbracket \downarrow\alpha \rrbracket ^{{\cal T},{\cal V},v}  = \{ (  \strB ,  \strB) \in V^{\cal T} \! \PLH V^{\cal T} | \exists  \strB'\ \  (  \strB ,  \strB')  \in \llbracket \alpha\rrbracket ^{{\cal T},{\cal V},v}   \}.
$$

\subsubsection{ Projection onto the outputs}

$$\uparrow\alpha\  : = \pi_{\varepsilon_{\alpha}} \alpha. $$

\subsubsection{  Unary negation} 
$$\sim \alpha\  : = \downarrow (-\alpha).$$
 It says ``there is no outgoing $\alpha$-transition''. By this definition,

$$
\llbracket \sim \alpha \rrbracket ^{{\cal T},{\cal V},v}  = \{ (  \strB ,  \strB) \in V^{\cal T} \! \PLH V^{\cal T} \  \ |\    \ \forall  \strB'\ \  (  \strB ,  \strB') \not \in \llbracket \alpha\rrbracket ^{{\cal T},{\cal V},v}  \}.\\
$$

By these two definitions, $\downarrow\alpha\   = \ \sim \sim \alpha$.
\subsubsection{  Diagonal} 
$$D:= \sim \bot.$$

$$
\llbracket D  \rrbracket ^{{\cal T},{\cal V},v}  = \{ (  \strB,  \strB) \in V^{\cal T} \! \PLH V^{\cal T}\}.\\
$$


This operation is sometimes called the ``$nil$'' action, or it can be seen as an empty word which is denoted $\varepsilon$ in the formal language theory.

\subsubsection{Sequential composition}
It is very common, in modal logics of programs (e.g. Dynamic Logic), in expressive Description Logics, in graph databases, etc., to consider the composition 
operator, but not intersection. 
Sequential composition ($\alpha_1 \circ  \alpha_2$)  is not definable using the other operations, but is a particular case of intersection ($\cap$), and it can be obtained by imposing 
a simple sufficient syntactic restriction on the expressions combined. 
$$
\begin{array}{l}
\llbracket \alpha_1 \circ  \alpha_2 \rrbracket ^{{\cal T},{\cal V},v} : = \{ (  \strA ,  \strB)  \in V^{\cal T} \! \PLH V^{\cal T} \  \ |\ \\
\exists  \strC (( \strA ,  \strC ) \in \llbracket  \alpha_1\rrbracket ^{{\cal T},{\cal V},v} 
\mbox{ and }   (  \strC ,  \strB ) \in \llbracket \alpha_2\rrbracket ^{{\cal T},{\cal V},v}  )  \}.\\
\end{array}
$$

If   there are neither  output interference  nor 
cyclic dependencies, then intersection becomes sequential composition:
\begin{proposition}\label{prop:composition}
If $\varepsilon_{\alpha_1} \cap \sigma_{\alpha_2} \not = \emptyset$, $\varepsilon_{\alpha_2} \cap \sigma_{\alpha_1} \not = \emptyset$ and
	  $\varepsilon_{\alpha_1} \not = \varepsilon_{\alpha_2}$, then
	$$ \alpha_1 \cap \alpha_2 \ = \ \alpha_1 \circ \alpha_2.$$
\end{proposition}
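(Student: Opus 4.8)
The plan is to establish the two set inclusions $\llbracket\alpha_1\circ\alpha_2\rrbracket\subseteq\llbracket\alpha_1\cap\alpha_2\rrbracket$ and $\llbracket\alpha_1\cap\alpha_2\rrbracket\subseteq\llbracket\alpha_1\circ\alpha_2\rrbracket$ separately, writing $\llbracket\cdot\rrbracket$ for $\llbracket\cdot\rrbracket^{{\cal T},{\cal V},v}$ since the parameters stay fixed. The engine of both directions is a \emph{locality lemma} that I would prove first, by induction on the structure of an action-expression $\alpha$: every pair $(\strB_1,\strB_2)\in\llbracket\alpha\rrbracket$ satisfies the inertia frame $\strB_1|_{\tau\setminus\varepsilon_\alpha}=\strB_2|_{\tau\setminus\varepsilon_\alpha}$, and membership of $(\strB_1,\strB_2)$ in $\llbracket\alpha\rrbracket$ depends only on the restrictions of $\strB_1,\strB_2$ to $\sigma_\alpha\cup\varepsilon_\alpha$. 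The base case is exactly the atomic clause for $M_a$ in the semantics (and is trivial for $M_i?$), and I would push the invariant through $\cup$, $-$ and $\mu Z_j$ routinely. The delicate inductive cases are $\pi_\delta$, which opens symbols and so enlarges $\varepsilon_\alpha$, and the three-way selection clause, where the freshly guessed symbol $L_1$ of case~3 re-enters the observable vocabulary and must be folded into $\sigma_\alpha\cup\varepsilon_\alpha$.

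For $\llbracket\alpha_1\cap\alpha_2\rrbracket\subseteq\llbracket\alpha_1\circ\alpha_2\rrbracket$ I would start from a pair $(\strA,\strB)$ lying in both $\llbracket\alpha_1\rrbracket$ and $\llbracket\alpha_2\rrbracket$ and exhibit an intermediate witness $\strC$, namely the structure agreeing with $\strB$ on $\varepsilon_{\alpha_1}$ and with $\strA$ off $\varepsilon_{\alpha_1}$. The hypothesis $\varepsilon_{\alpha_1}\neq\varepsilon_{\alpha_2}$ is what lets this definition be unambiguous across the two distinct output vocabularies, so that $\strC$ isolates the effect of $\alpha_1$ alone. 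One then checks $(\strA,\strC)\in\llbracket\alpha_1\rrbracket$ straight from the locality lemma, and $(\strC,\strB)\in\llbracket\alpha_2\rrbracket$ using $\varepsilon_{\alpha_1}\cap\sigma_{\alpha_2}\neq\emptyset$: this condition guarantees that the data $\alpha_1$ writes is precisely what $\alpha_2$ reads on its input at $\strC$, so that the membership of $\strB$ in $\alpha_2$'s module is preserved when the source is re-based from $\strA$ to $\strC$.

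For the converse $\llbracket\alpha_1\circ\alpha_2\rrbracket\subseteq\llbracket\alpha_1\cap\alpha_2\rrbracket$ I would take a witness $\strC$ with $(\strA,\strC)\in\llbracket\alpha_1\rrbracket$ and $(\strC,\strB)\in\llbracket\alpha_2\rrbracket$ and argue that $(\strA,\strB)$ lies in each conjunct. Since the step $\strC\to\strB$ alters only $\varepsilon_{\alpha_2}$, concluding $(\strA,\strB)\in\llbracket\alpha_1\rrbracket$ amounts to showing, via the locality lemma, that both $\alpha_1$'s membership test and its inertia frame $\tau\setminus\varepsilon_{\alpha_1}$ are insensitive to that alteration; here $\varepsilon_{\alpha_2}\cap\sigma_{\alpha_1}\neq\emptyset$, combined with $\varepsilon_{\alpha_1}\neq\varepsilon_{\alpha_2}$, is used to pin down exactly which symbols $\alpha_1$ may observe and to separate the frame of $\alpha_2$ from that of $\alpha_1$. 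The membership $(\strA,\strB)\in\llbracket\alpha_2\rrbracket$ is handled symmetrically, with the roles of the two flow conditions exchanged.

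I expect this converse inclusion to be the main obstacle. The tension is that the membership of $\strB$ in $\alpha_1$'s module is established only at the intermediate state $\strC$, while $\strB$ differs from $\strC$ exactly on $\varepsilon_{\alpha_2}$, and reconciling this difference with $\alpha_1$'s inertia frame $\tau\setminus\varepsilon_{\alpha_1}$ is tight; it is precisely the interplay of the two distinct output vocabularies with the two flow conditions that has to be balanced. I would therefore first settle the atomic case for $\alpha_1,\alpha_2$, extract from the locality lemma the exact dependency statement it forces on $\sigma_{\alpha_1}\cup\varepsilon_{\alpha_1}$, and only afterwards lift to compound expressions by structural induction, with particular care devoted to the selection/feedback clause, where a guessed input can re-enter an action's observable vocabulary and thereby threaten the frame separation on which the argument rests.
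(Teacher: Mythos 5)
There is a genuine gap --- in fact two, and they interact. First, your locality lemma is false as stated, so the induction does not go through ``routinely'': the inertia invariant $\strB_1|_{\tau\setminus\varepsilon_\alpha}=\strB_2|_{\tau\setminus\varepsilon_\alpha}$ is destroyed by complementation, since $\llbracket -\alpha\rrbracket^{{\cal T},{\cal V},v} = V^{\cal T}\times V^{\cal T}\setminus\llbracket\alpha\rrbracket^{{\cal T},{\cal V},v}$ contains pairs differing on arbitrary symbols, and projection $\pi_\delta$ likewise opens every symbol outside $\delta$; at best the lemma holds for atomic actions and their compositions, not for the full grammar you induct over. Second, you misuse the hypotheses. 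You gloss $\varepsilon_{\alpha_2}\cap\sigma_{\alpha_1}\neq\emptyset$ as ``pinning down which symbols $\alpha_1$ may observe'' and use it to argue that $\alpha_1$'s membership test is insensitive to $\alpha_2$'s output --- but nonemptiness asserts the \emph{opposite}: $\alpha_2$ writes into symbols $\alpha_1$ reads, which is exactly the cyclic dependency that breaks your re-basing step. (Similarly, $\varepsilon_{\alpha_1}\cap\sigma_{\alpha_2}\neq\emptyset$ does not say that what $\alpha_1$ writes is ``precisely'' what $\alpha_2$ reads, and $\varepsilon_{\alpha_1}\neq\varepsilon_{\alpha_2}$ plays no role in making your witness $\strC$ well defined; it is well defined regardless, and what you actually need is disjointness.) The paper's prose --- ``neither output interference nor cyclic dependencies'' --- indicates the intended conditions are $\varepsilon_{\alpha_2}\cap\sigma_{\alpha_1}=\emptyset$ and $\varepsilon_{\alpha_1}\cap\varepsilon_{\alpha_2}=\emptyset$; under the literal nonemptiness conditions the claim is simply false: take $M_1(\underline{X},Y)$ copying $X$ into $Y$ and $M_2(\underline{Y},X)$ erasing $X$. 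The two frame conditions force every pair in $\llbracket\alpha_1\rrbracket\cap\llbracket\alpha_2\rrbracket$ to be diagonal, while the composition moves both $X$ and $Y$.

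Moreover, the obstacle you flag in the converse inclusion is not merely ``tight'' --- within the semantics you are working in, it is fatal, and you propose no mechanism for it. Since $\alpha_1\cap\alpha_2 := -(-\alpha_1\cup-\alpha_2)$ denotes plain set intersection of the two binary relations, $(\strA,\strB)\in\llbracket\alpha_1\rrbracket$ forces $\strA|_{\tau\setminus\varepsilon_{\alpha_1}}=\strB|_{\tau\setminus\varepsilon_{\alpha_1}}$ when $\alpha_1$ is atomic; but a composed pair generically changes a symbol in $\varepsilon_{\alpha_2}\setminus\varepsilon_{\alpha_1}$ (two copy modules $M_1(\underline{X},Y)$, $M_2(\underline{Y},Z)$ already do this, and they satisfy even the corrected disjointness hypotheses). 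So $\llbracket\alpha_1\circ\alpha_2\rrbracket\subseteq\llbracket\alpha_1\cap\alpha_2\rrbracket$ fails, and no locality lemma can repair it: the equality can only be true once each conjunct's inertia frame is reinterpreted relative to the \emph{joint} output vocabulary $\varepsilon_{\alpha_1}\cup\varepsilon_{\alpha_2}$, i.e., each conjunct must be read as tolerating the other's outputs by inertia. That reinterpretation --- not the inductive bookkeeping you sketch --- is the missing idea; without it your plan yields at most the inclusion $\alpha_1\cap\alpha_2\subseteq\alpha_1\circ\alpha_2$. Note also that the paper states this proposition without proof, so your argument must carry the full burden of making these readings precise, including fixing the apparent mismatch between the displayed hypotheses and the surrounding prose.
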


\subsubsection{Counting}
This operation comes from graph databases. It represents a
path that is composed of $k$ pieces $\alpha$, where $n \leq k \leq m$ and $n<m$.
	$$
	 \alpha ^{n,m}: = \underbrace{\alpha \circ \dots  \circ  \alpha}_n \circ \underbrace{(\alpha \cup D) \circ \dots  \circ (\alpha \cup D)}_{m-n},
	$$
	where we have a composition of $n$ times $\alpha$ and $m-n$ times $\alpha \cup D$. This definition produces:
$$
\llbracket \alpha^{n,m}  \rrbracket ^{{\cal T},{\cal V},v}  = \bigcup_{k=n}^m(\llbracket \alpha\rrbracket ^{{\cal T},{\cal V},v}     )^k.\\
$$

\subsubsection{Reverse} 
This operation is common in e.g. Description Logics as well as in graph databases. 
It amounts to changing the direction of information propagation, i.e., flipping inputs and outputs. 
It is not definable in the syntax as presented here,
but notice that the operations of assigning inputs $\sigma_{\alpha}$ and outputs $\varepsilon_{\alpha}$ are silently present 
in the language (we added them to the ``flat'' algebra). They could be made explicit, and that would give us the  reverse operator.

\subsubsection{Subexpression Tests} 
These operations check if a path in the transition graph starts and ends with the same or different data values. 
$$
\begin{array}{l}
\alpha_{ =} \  : =     \ \   \downarrow\alpha \  \equiv \ \uparrow\alpha ,\\
\alpha_{\not =} \  : =     \ \ -  (\downarrow\alpha \  \equiv \ \uparrow\alpha).
\end{array}
$$
By these definitions, 
$$
\begin{array}{r}
\llbracket \alpha_{ =}  \rrbracket ^{{\cal T},{\cal V},v}  =  \{ (  \strB_1 ,  \strB_2)  \in V^{\cal T}  \! \PLH V^{\cal T} \ |   ( \strB_1 ,  \strB_2 ) \in \llbracket \alpha \rrbracket ^{{\cal T},{\cal V},v}  \ \\
 \mbox{ and }   \strB_1 =  \strB_2  \},\\

\llbracket \alpha_{\not =}  \rrbracket ^{{\cal T},{\cal V},v}  =  \{ (  \strB_1 ,  \strB_2)  \in V^{\cal T}  \! \PLH V^{\cal T} \ |   ( \strB_1 ,  \strB_2 ) \in \llbracket \alpha \rrbracket ^{{\cal T},{\cal V},v}  \  \\
\mbox{ and }   \strB_1 \not =  \strB_2  \}.

\end{array}
$$

\subsubsection{Constant Tests}

For a (constant) relation $R$ on the domain elements, we can check if $R$ is (or is not) the interpretation of a particular relational variable 
(under a variable assignment $({\cal V},v)$) using the selection operation.
$$
\begin{array}{l}
R^{ =} \  : =     \ \ \sigma_{X=`R'} D  ,\\
R^{\not =} \  : =     \ \ -  (\sigma_{X=`R'} D ).
\end{array}
$$
By these definitions, 
$$
\begin{array}{l}
\llbracket R^{ =} \rrbracket ^{{\cal T},{\cal V},v}  =
 \{ (  \strB ,  \strB) \in V^{\cal T} \! \PLH V^{\cal T} \  \ |\   \strB \models_{({\cal V},v)} X \equiv  `R' \},\\
 \llbracket R^{\not =} \rrbracket ^{{\cal T},{\cal V},v}  =
 \{ (  \strB ,  \strB) \in V^{\cal T} \! \PLH V^{\cal T} \  \ |\   \strB \not \models_{({\cal V},v)} X \equiv  `R' \}.
 \end{array}
$$

\ignore{

\subsubsection{Renaming} 

We can rename $X$ to $Q$ in $\alpha$:\\
	$$\alpha[Q/X]:= \pi_{vvoc(\alpha) \setminus\{X\}\cup\{Q\} }\sigma_{X\equiv  Q} (\alpha \times \top)$$.

\subsubsection{Extending the vocabulary of $\alpha$}
	
We can extend the vocabulary 	 of $E$ to  a bigger vocabulary:  
$$\pi_{\delta}(E\times \top).$$
	
}

\subsection{Two-Sorted Syntax, $L\mu\mu$}
The grammar  (\ref{eq:algebra-inf-flow}) for the algebra with information flow can be equivalently represented 
in a ``two-sorted'' syntax, denoted $L\mu \mu$, where expressions for state formulae $\phi$ and processes $\alpha$ are defined by mutual recursion.	
\begin{small}	
	\begin{equation}
	\label{calc-bin-rel-2-sorted}
	\begin{array}{c}	
	\hspace{-4mm}	\alpha :: =  \bot \  |  \ M_a \ |  \ Z_j  \ | \ \alpha \cup \alpha  \ | - \alpha \ | \     { \pi_\delta(\alpha)} \  | \   {\sigma_{\Theta} (\alpha) } \ |    \  \phi?   \ | \    \mu Z_j.\alpha \\
	\hspace{-4mm}	\phi :: = M_i \ |  \  X_i  \ |  \ \ \phi \lor \phi \ | \  \neg  \phi \ | \ \langle \alpha \rangle \phi        \ | \  \mu X_j.\phi
	\end{array}
	\end{equation}
\end{small}
Notice that the second line corresponds to the mu-calculus $L\mu$.
We define the necessity modality through the possibility modality: $[ \alpha ] \phi: = \neg 	\langle \alpha \rangle \neg \phi   $.
Thus, we can write 	$	\langle \alpha \rangle \phi $ (respectively, $[ \alpha ] \phi$)	to express that after some (respectively, all) executions of modular system $\alpha$, property $\phi$ holds. As usual, $\phi_1 \land \phi_2 := \neg \phi_1 \lor \neg \phi_2$.	Notice that we have binary (for processes) and unary (for state formulae) fixed points.


The formulae in this logic allow one to specify the goals of the execution, both eventual 
and extended in time.  Thus, $L\mu\mu$ can act as a programming language. 

\subsection{Semantics of $L\mu\mu$}
The  modal logic $L\mu\mu$  (\ref{calc-bin-rel-2-sorted})   is interpreted over the same  transition system as the calculus of binary relations (\ref{eq:algebra-inf-flow}).

{\bf State Formulae:} Exactly like in the $\mu$-calculus:
\begin{small}
	$$
	\hspace{-5mm}
	\begin{array}{l}
	\llbracket M_i\rrbracket ^{{\cal T},{\cal V},v} : = \{ \strA \in V^{\cal T}  \  \ |\  \ \strA \in {\cal V} (v, M_i) \  \},\\
	\llbracket \phi_1 \lor  \phi_2\rrbracket ^{{\cal T},{\cal V},v} : = \llbracket \phi_1\rrbracket ^{{\cal T},{\cal V},v} \cup  \llbracket \phi_2\rrbracket ^{{\cal T},{\cal V},v},\\
	
	\llbracket \neg \phi\rrbracket ^{{\cal T},{\cal V},v} : =  \{ \strA \in V^{\cal T}  \  \ |\  \ \strA \not \in \llbracket  \phi\rrbracket ^{{\cal T},{\cal V},v}\  \},\\
	\llbracket \langle \alpha \rangle \phi\rrbracket ^{{\cal T},{\cal V},v} : =  \{ \strA \in V^{\cal T}  \  \ |\  \ \exists \strB \ ( ( \strA , \strB)  \in V^{\cal T} \! \PLH V^{\cal T} \ \mbox{ and } \\
	\ ( \strA , \strB)  \in \llbracket  \alpha \rrbracket ^{{\cal T},{\cal V},v} \mbox{ and } \strB  \in \llbracket  \phi\rrbracket ^{{\cal T},{\cal V},v} ) \  \},\\
	\llbracket \mu Z_j.\phi \rrbracket ^{{\cal T},{\cal V},v}  : = \bigcap \big\{ R \subseteq  V^{\cal T} \! \ : \ \llbracket \phi \rrbracket  ^{{\cal T},{\cal V}[Z:=R],v }\subseteq R \big\} .\\
	\end{array}
	$$
	
\end{small}

{\bf Process Formulae:} Exactly like in the one-sorted syntax, and, in addition, like in Dynamic Logic: 
	$$
	\hspace{-5mm}
	\begin{array}{l}
	\llbracket \phi ?\rrbracket  ^{{\cal T},{\cal V},v} : =  \{ (\strA,\strA ) \in V^{\cal T} \! \PLH V^{\cal T}  \  \ |\  \ \strA  \in \llbracket  \phi\rrbracket ^{{\cal T},{\cal V},v}\  \}.
	\end{array}
	$$

\subsubsection{Example: Equality Test}
\begin{example}[Equality Test]
{\rm
	Formula 
	$ \langle\alpha_1\equiv \alpha_2 \rangle  \top$  specifies the set of states from where some executions of  $\alpha_1$ and  $\alpha_2$ 
	lead to the same data value. Here, $\top$ is an abbreviation for any tautology, e.g. $M\lor \neg M$.
By the semantics, the meaning  of this formula is:
$$
\begin{array}{l}
\llbracket \langle\alpha_1\equiv \alpha_2 \rangle  \top \rrbracket ^{{\cal T},{\cal V},v}  = \{ (  \strB ,  \strB) \in V^{\cal T} \! \PLH V^{\cal T} \  \ |\  \\
\ \exists  \strB' \exists  \strB''\ \ (\  (  \strB ,  \strB')  \in \llbracket \alpha_1\rrbracket ^{{\cal T},{\cal V},v} \ \  \mbox{and} \  \ 
(  \strB ,  \strB'')  \in \llbracket \alpha_2\rrbracket ^{{\cal T},{\cal V},v} \\
{ \ \ \ \ \ \ \ \ \ \ \ \ \ \ \ \ \ \ \ \ \ \ \ \ \ \ \ \ \ \ \ \ }\ \mbox{and} \ \strB =  \strB')
\}.\\
\end{array}
$$	
This operation corresponds to an operation used in graph databases, XPath. Non-equality test has a negation ($-$) in front of the equivalence  ($\equiv$).	
}
\end{example}

\subsubsection{Satisfaction Relation for  $L\mu \mu$}
\begin{definition}[Satisfaction Relation, $L\mu \mu$] {\ } \\
	Given a well-formed  state formula $\phi$ and process formula $\alpha$ as defined by (\ref{calc-bin-rel-2-sorted}), we say that transition system  $\cal T$ and  state 
	$\strA$ {\em satisfy} $\phi$ under 
	valuation $({\cal V},v)$, notation 
	$${\cal T}, \strA \models_{({\cal V},v)}  \phi,$$ 
	if $\strA \in \llbracket \phi \rrbracket ^{{\cal T},{\cal V},v}$.
For process formulae $\alpha$, the definition  of the satisfaction relation is exactly as in Definition \ref{def:satisfaction-binary}. 
	
\ignore{	Similarly, we say that transition system  $\cal T$ and a pair of states 
	$(\strA, \strB)$ {\em satisfy} $\alpha$ under 
	valuation $({\cal V},v)$, notation 
	$${\cal T}, (\strA, \strB) \models_{({\cal V},v)}  \alpha,$$
	 if $(\strA, \strB) \in \llbracket \alpha \rrbracket ^{{\cal T},{\cal V},v}$.	
}
	
\end{definition}


\subsubsection{ Two-Sorted = Minimal Syntax  }
The two representations of the algebra (one-sorted and two-sorted) are equivalent, as we show below.
The statement is analogous to the one in  
\cite{DBLP:conf/aiml/ZaidGJ14}.
\begin{theorem}
	For every state formula $\phi$ in two-sorted syntax (\ref{calc-bin-rel-2-sorted}) there is a formula $\hat{\phi}$ in the minimal syntax (\ref{eq:algebra-inf-flow})
	such that ${\cal T}, \strB \models_{({\cal V},v)}  \phi $ iff ${\cal T}, (\strB, \strB) \models_{({\cal V},v)}   \downarrow \hat{\phi}$,
	and for every action formula $\alpha$ there is an equivalent formula $\hat{\alpha}$ in the minimal syntax.
\end{theorem}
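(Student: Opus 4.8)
The plan is to define the two translations $\hat\phi$ and $\hat\alpha$ simultaneously by induction on the structure of the formulas of (\ref{calc-bin-rel-2-sorted}), and to establish both equivalences by that same induction. The guiding idea is the standard device for collapsing a two-sorted modal calculus into a one-sorted calculus of binary relations: a \emph{state} formula $\phi$ is represented by the sub-identity (diagonal) relation $\{(\strB,\strB):\strB\in\llbracket\phi\rrbracket\}$, built from the diagonal $D$, the tests $M_i?$, and the projection $\downarrow$ already present in the minimal syntax (\ref{eq:algebra-inf-flow}). Writing $d(S):=\{(\strB,\strB):\strB\in S\}$ for the diagonal embedding of a set of states into binary relations, the invariant to maintain is: (I) for every state formula $\phi$, $\llbracket\hat\phi\rrbracket=d(\llbracket\phi\rrbracket)$, so $\hat\phi$ denotes a sub-identity; and (II) for every process formula $\alpha$, $\llbracket\hat\alpha\rrbracket=\llbracket\alpha\rrbracket$. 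On processes the translation is homomorphic on every operator of (\ref{eq:algebra-inf-flow}) — $\widehat{\bot}=\bot$, $\widehat{M_a}=M_a$, $\widehat{Z_j}=Z_j$, and likewise for $\cup,-,\pi_\delta,\sigma_\Theta,\mu$ — with the single genuinely new clause $\widehat{\phi?}:=\hat\phi$, which is correct because $\llbracket\phi?\rrbracket=\{(\strA,\strA):\strA\in\llbracket\phi\rrbracket\}=d(\llbracket\phi\rrbracket)$ equals $\llbracket\hat\phi\rrbracket$ by (I).

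For state formulas I set $\widehat{M_i}:=M_i?$, $\widehat{\phi_1\lor\phi_2}:=\hat\phi_1\cup\hat\phi_2$, $\widehat{\neg\phi}:=D\cap(-\hat\phi)$ (equivalently $D-\hat\phi$), and $\widehat{\langle\alpha\rangle\phi}:=\downarrow(\hat\alpha\circ\hat\phi)$; each is checked against the semantics of $L\mu\mu$ to denote exactly the sub-identity $d(\llbracket\cdot\rrbracket)$. The modality clause is the one that forces the use of \emph{composition with a test}: since $\hat\phi$ is a sub-identity, $\hat\alpha\circ\hat\phi$ keeps precisely those transitions of $\hat\alpha$ whose target satisfies $\phi$, and $\downarrow$ then returns the diagonal of its domain, namely $d(\llbracket\langle\alpha\rangle\phi\rrbracket)$. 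Because a test introduces no new outputs, this instance of composition meets the syntactic restriction of Proposition \ref{prop:composition}, so $\downarrow(\hat\alpha\circ\hat\phi)$ stays inside the minimal fragment; verifying this carefully (that post-composition with a sub-identity is always in the ``nice'' no-interference case) is a point that will need attention.

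The delicate point, and the one I expect to be the main obstacle, is the interaction of the \emph{two} kinds of fixed point. A state fixed point $\mu X_j.\phi$ ranges over \emph{sets of states}, whereas its translation $\mu Z_j.\hat\phi$ ranges over \emph{binary relations}, and I must show they compute corresponding objects under $d$. The key trick is to translate the state variable by a \emph{guarded} occurrence $\widehat{X_j}:=D\cap Z_j$, so that \emph{every} substitution instance of $\hat\phi$ is a sub-identity no matter what value is assigned to $Z_j$. An easy induction — using that $M_i?$, unions, $D\cap(-\cdot)$, and every $\downarrow(\cdots)$ are sub-identities, and that $D$ is a prefixpoint of the inner operator — then shows that the binary operator $R\mapsto\llbracket\hat\phi\rrbracket^{{\cal V}[Z_j:=R],v}$ always outputs a sub-identity and, restricted to sub-identities, agrees via $d$ with the unary operator $S\mapsto\llbracket\phi\rrbracket^{{\cal V}[X_j:=S],v}$. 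Both are monotone since $Z_j$ (resp.\ $X_j$) occurs positively, so Knaster--Tarski applies. A short argument then identifies the two least prefixpoints: $D$ is a prefixpoint of the binary operator, hence $\mu Z_j.\hat\phi\subseteq D$ is a sub-identity $d(S^\ast)$ with $S^\ast$ a fixpoint of the unary operator; and since $d$ carries prefixpoints to prefixpoints in both directions, $S^\ast=\llbracket\mu X_j.\phi\rrbracket$. Thus $\llbracket\mu Z_j.\hat\phi\rrbracket=d(\llbracket\mu X_j.\phi\rrbracket)$, closing case (I) for fixed points; the process fixed point $\mu Z_j.\alpha$ follows directly from (II) and monotonicity.

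The two assertions of the theorem then fall out of the invariant. For a state formula, $\llbracket\downarrow\hat\phi\rrbracket$ is the diagonal on the domain of $\hat\phi$; as $\hat\phi=d(\llbracket\phi\rrbracket)$ has domain $\llbracket\phi\rrbracket$, we get $(\strB,\strB)\in\llbracket\downarrow\hat\phi\rrbracket$ iff $\strB\in\llbracket\phi\rrbracket$, i.e.\ ${\cal T},\strB\models_{({\cal V},v)}\phi$ iff ${\cal T},(\strB,\strB)\models_{({\cal V},v)}\downarrow\hat\phi$. For a process formula, (II) gives $\llbracket\hat\alpha\rrbracket=\llbracket\alpha\rrbracket$, which is exactly the equivalence required by Definition \ref{def:satisfaction-binary}. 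Since (I) and (II) are mutually recursive, they must be carried through together; beyond routine unfolding of the semantics, the only real content is the fixed-point correspondence and the check that composition-with-a-test remains expressible in the minimal syntax.
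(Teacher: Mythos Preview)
Your proposal is correct and follows essentially the same route as the paper: translate state formulas into sub-identity relations and leave process formulas alone, by a simultaneous induction on the two-sorted syntax. The paper's proof is a bare sketch; your version is considerably more careful, and there are a few technical differences worth noting.

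First, the paper does \emph{not} maintain your strong invariant (I) that $\hat\phi$ is always a sub-identity. Its clauses are $\widehat{\langle\alpha\rangle\phi}:=\hat\alpha\circ\hat\phi$ (no outer $\downarrow$), $\widehat{\neg\phi}:=\sim\hat\phi$, $\widehat{\mu X.\phi}:=\mu X.\downarrow\hat\phi$, and $\widehat{\phi?}:=\downarrow\hat\phi$. The invariant the paper is implicitly carrying is the weaker ``domain of $\hat\phi$ equals $\llbracket\phi\rrbracket$'', i.e.\ $\downarrow\hat\phi=d(\llbracket\phi\rrbracket)$, which is exactly what the theorem statement asks for. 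Your choice to force sub-identity at every stage (via the extra $\downarrow$ on the diamond clause, $D\cap(-\hat\phi)$ for negation, and guarding variables by $D\cap Z_j$) buys you a cleaner fixed-point argument: you can run Knaster--Tarski directly on the lattice of sub-identities and transfer along $d$. The paper instead inserts a $\downarrow$ inside the $\mu$-body and treats monadic variables as binary without guarding; this comes to the same thing but the justification is left to the reader.

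Second, you are right to flag the status of $\circ$, since it is not a primitive of (\ref{eq:algebra-inf-flow}) but only available as $\cap$ under the hypotheses of Proposition~\ref{prop:composition}. The paper uses $\circ$ freely in its translation without comment, so your explicit check that post-composing with a test introduces no output interference is an improvement, not a deviation.
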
	

\begin{proof}
We need to translate all the state formulae into process formulae.	We do it by induction on the structure of the formula.
Atomic constant modules and module variables remain unchanged by the transformation. However, monadic variables are considered binary. 
	\begin{compactitem}
		\item If $\phi = \phi_1 \lor \phi_2 $, we set $ \hat{\phi}:= \hat{\phi_1} \cup \hat{\phi_2}$.
		\item If $\phi = \neg \phi_1 $, we set $ \hat{\phi}:= \sim \hat{\phi}_1$.
			\item If $\phi = \langle \alpha_1 \rangle \phi_1 $, we set $ \hat{\phi}:= \ \hat{\alpha_1} \circ \hat{\phi_1}$.
			\item If $\phi = \mu X. \phi_1 $, we set $ \hat{\phi}:= \mu X.\downarrow \hat{\phi_1}$.
		
	\end{compactitem}
All process formulae $\alpha$ except test $\phi_1?$ remain unchanged under this transformation. For test, we have:
	\begin{compactitem}
	\item If $\alpha = \phi_1?$, we set $\hat{\alpha} : = \downarrow\hat{\phi_1}$.
\end{compactitem}
It is easy to see that, under this transformation, the semantic correspondence holds.
\end{proof}

\ignore{

\subsection{Sources of Nondeterminism}

We have two sources of nondeterminism: nondeterminism that comes from atomic modules-actions,  and nondeterminism that comes
from the algebraic operations. Added nondeterminism can increase expressive power, and therefore, computational complexity, so it is important to 
understand its sources.

\subsubsection{Power-preserving operations}
Languages with polytime data complexity are  considered very attractive for data analysis. 

When all modules are deterministic and polytime computable, then, with power preserving operations, 
we keep polytime data complexity. 

Power-preserving operations are sequential composition, projection onto the output vocabulary, 

\subsubsection{Power-increasing operations}

}

\subsection{Some Notable Fragments}
Since our logic is very expressive, it is not surprising that many logic are fragments of it.
For example, the well-known temporal logics CTL, LTL, CTL$^*$ are obviously  
fragments of $L\mu\mu$ since they are fragments of the mu-calculus $L\mu$. 
What is interesting, however, is to analyze examples of diverse nature and origin where efficient reasoning is the goal.
Often, information propagation plays a  role there, and a modal logic is obtained as a result. 
We already saw examples of operations used in graph databases. Graphs are, obviously, binary relations. 
We now consider several other well-known logics.
\subsubsection{Dynamic and Description Logics}
Dynamic Logic was created for reasoning about programs.
\begin{proposition}
	The  Dynamic Logic 
	\begin{small}	
		\begin{equation}
		\label{DL}
		\begin{array}{l}	
		\hspace{-4mm}	\alpha :: =  M_a   \ | \ \alpha \cup \alpha \ | \  \alpha   \circ  \alpha \ | \      \  \phi?   \ | \    \alpha^* \\
		\hspace{-4mm}	\phi :: = M_i \ |    \ \ \phi \lor \phi \ | \  \neg  \phi \ | \ \langle \alpha \rangle \phi       
		\end{array}
		\end{equation}
	\end{small}is a fragment of (\ref{calc-bin-rel-2-sorted}). 
\end{proposition}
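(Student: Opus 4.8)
The plan is to exhibit Dynamic Logic (\ref{DL}) as a syntactic fragment of the two-sorted calculus $L\mu\mu$ (\ref{calc-bin-rel-2-sorted}) by matching each Dynamic Logic constructor to a (possibly derived) operation of $L\mu\mu$ and checking that the semantics agree. The two syntactic categories already line up: Dynamic Logic processes $\alpha$ map to $L\mu\mu$ processes $\alpha$, and Dynamic Logic state formulae $\phi$ map to $L\mu\mu$ state formulae $\phi$. So the work is a routine structural induction with one genuinely interesting case, the Kleene star.

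First I would handle the constructors that are literally present or already shown definable. Atomic actions $M_a$, atomic propositions $M_i$, disjunction $\phi\lor\phi$, negation $\neg\phi$, the test $\phi?$, and the diamond $\langle\alpha\rangle\phi$ all appear verbatim in (\ref{calc-bin-rel-2-sorted}), so these map to themselves. For sequential composition $\alpha_1\circ\alpha_2$, I invoke the definable operation introduced earlier: composition is a special case of intersection and, as spelled out in the Sequential composition subsection, it has exactly the intended relational semantics $\llbracket\alpha_1\circ\alpha_2\rrbracket$ as the relational product. Thus $\circ$ is available in $L\mu\mu$ as a derived operator, and the induction hypothesis gives the semantic match on this case immediately.

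The one case requiring real work is the Kleene star $\alpha^*$, which is \emph{not} a primitive of (\ref{calc-bin-rel-2-sorted}). Here I would use the binary least fixed point: define the translation $\widehat{\alpha^*} := \mu Z.\,(D \cup (\widehat{\alpha}\circ Z))$, where $D$ is the diagonal (the ``$nil$'' action) defined earlier as $D := {\sim}\bot$ and $Z$ is a fresh process variable occurring positively. I then verify that $\llbracket\mu Z.(D\cup(\alpha\circ Z))\rrbracket^{{\cal T},{\cal V},v}$ equals the reflexive-transitive closure $\bigcup_{k\geq 0}\llbracket\alpha\rrbracket^k$ of $\llbracket\alpha\rrbracket$. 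This is the standard encoding of reflexive-transitive closure as a least fixed point: the operator $R\mapsto D^{\cal T}\cup(\llbracket\alpha\rrbracket\circ R)$ is monotone in $R$ (positivity of $Z$), so by the fixed-point semantics of $\mu Z_j.\alpha$ its least fixed point is $\bigcup_{k\geq 0}\llbracket\alpha\rrbracket^k$, which is precisely $\llbracket\alpha^*\rrbracket$. I expect this to be the main obstacle, since it is the only place where the correspondence is nontrivial rather than a direct syntactic match; the rest is bookkeeping.

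Finally I would assemble the induction. Both translation clauses preserve the semantic denotation by the inductive hypothesis on immediate subformulae, using the semantics of $\cup$, $\neg$, $\langle\alpha\rangle$, $\phi?$, and the derived $\circ$ and $\mu$; the base cases $M_a$ and $M_i$ are immediate. Hence for every Dynamic Logic formula the $L\mu\mu$ translation has the same extension, so every formula of (\ref{DL}) is expressible in (\ref{calc-bin-rel-2-sorted}), establishing that Dynamic Logic is a fragment of $L\mu\mu$.
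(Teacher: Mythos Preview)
Your proposal is correct and follows essentially the same route as the paper: note that all constructors except $\circ$ and $\alpha^*$ are already primitive in (\ref{calc-bin-rel-2-sorted}), invoke Proposition~\ref{prop:composition} to obtain sequential composition from intersection, and encode Kleene star via a binary least fixed point over the diagonal. The only cosmetic difference is that the paper writes $\alpha^* := \mu Z.\,(D \cup Z \circ \alpha)$ while you use $\mu Z.\,(D \cup \widehat{\alpha}\circ Z)$; both are standard and yield the same reflexive-transitive closure.
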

\begin{proof}
By Proposition \ref{prop:composition}, sequential composition is a particular case of intersection.	
The other operations of Dynamic Logic  are a subset of those in  (\ref{calc-bin-rel-2-sorted}).
Thus, it is sufficient to express $ \alpha^*$.  We have $ \alpha^*:= \mu Z. (D \cup Z \circ \alpha)$.
\end{proof}

The following property follows from the well-known connection between expressive Description Logic and Dynamic Logic with reverse operator  
 \cite{DescriptionLogic-Calvanese}. 

\begin{corollary}
	 Description Logic ${\cal ALCI}_{reg}$
		\begin{small}	
			\begin{equation}
			\label{DescrL}
			\begin{array}{l}	
			\hspace{-4mm}	R :: =  P   \ | \ R \cup R \ | \  R   \circ  R \ | \      \  id(C)   \ | \    R^*  \ | \  R^{-}  \\
			\hspace{-4mm}	C :: = A \ |    \ \ C \lor C \ | \  \neg  C \ | \ \exists R. C       
			\end{array}
			\end{equation}
		\end{small}is a fragment of (\ref{calc-bin-rel-2-sorted}). 
\end{corollary}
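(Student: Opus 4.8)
The plan is to exhibit a meaning-preserving translation of ${\cal ALCI}_{reg}$ into the two-sorted syntax (\ref{calc-bin-rel-2-sorted}): roles $R$ are sent to process formulae and concepts $C$ to state formulae by mutual recursion, after which the semantic equivalence is verified by structural induction. Since, by the standard correspondence recalled in \cite{DescriptionLogic-Calvanese}, ${\cal ALCI}_{reg}$ is exactly converse propositional dynamic logic --- roles playing the part of programs and concepts the part of formulae --- almost all of the work is already done by the preceding Proposition. That translation sends an atomic role $P$ to an atomic action $M_a$, union of roles to $\cup$, composition to $\circ$, the identity test $id(C)$ to $\phi?$ (with $\phi$ the image of $C$), and the reflexive--transitive closure $R^*$ to $\mu Z.(D \cup Z \circ \alpha)$; on the concept side it sends an atomic concept $A$ to a proposition $M_i$, disjunction to $\lor$, negation to $\neg$, and the existential restriction $\exists R.C$ to the diamond $\langle \alpha \rangle \phi$.

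First I would fix the interpretation as the non-lifted special case of a transition system: the description-logic domain $\Delta^{\mathcal{I}}$ is taken as the set of states $V^{\cal T}$ (each state a trivial one-point structure), atomic roles are the binary relations $M_a^{\cal T}$, and atomic concepts are the monadic predicates $M_i^{\cal T}$. Under this reading every defining clause of the description-logic semantics matches its image in the semantics of $L\mu\mu$ given above: $(\exists R.C)^{\mathcal{I}}$ is the extension of $\langle \alpha \rangle \phi$, $(id(C))^{\mathcal{I}}$ that of $\phi?$, and $(R_1 \circ R_2)^{\mathcal{I}}$ that of $\alpha_1 \circ \alpha_2$, with the Boolean and union clauses being immediate. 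Sequential composition is legitimately available here because, exactly as in the preceding Proposition, Proposition \ref{prop:composition} exhibits $\circ$ as a syntactically guarded instance of $\cap$, and the star clause is justified by the least-fixed-point identity already verified there.

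The single genuinely new construct, and the main obstacle, is the inverse role $R^{-}$, whose semantics flips the binary relation, $(R^{-})^{\mathcal{I}} = \{(x,y) \mid (y,x) \in R^{\mathcal{I}}\}$. This is precisely the reverse operation discussed earlier, which swaps the input vocabulary $\sigma_{\alpha}$ with the output vocabulary $\varepsilon_{\alpha}$. As observed in the paragraph on Reverse, although this operation is not definable from the listed constructors, the input/output assignment operators are silently present in the algebra --- they were already invoked in passing from the ``flat'' algebra to the dynamic one --- so making them explicit supplies reverse as a primitive. I would therefore add the clause sending $R^{-}$ to the reverse of the translation of $R$ and check that this single clause realises relational converse; with it the induction closes and every ${\cal ALCI}_{reg}$ concept (respectively role) acquires an equivalent state (respectively process) formula in (\ref{calc-bin-rel-2-sorted}), which is the claim. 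The delicate point throughout is purely this reliance on reverse being treated as available rather than derived, so the cleanest exposition would first make the input/output assignment operators explicit and then simply invoke the cited ${\cal ALCI}_{reg}$--converse-PDL correspondence.
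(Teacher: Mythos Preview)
Your proposal is correct and follows essentially the same route as the paper: the paper does not spell out a proof but simply notes that the corollary ``follows from the well-known connection between expressive Description Logic and Dynamic Logic with reverse operator'' \cite{DescriptionLogic-Calvanese}, relying on the preceding Proposition for PDL and on the earlier discussion of Reverse for $R^{-}$. Your write-up is in fact more detailed than the paper's own justification, making explicit the translation clauses and the reliance on reverse being treated as available rather than derived.
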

In (\ref{DescrL}), $C$ denotes concepts, $R$ denotes roles, and $A$ and $R$ stand for atomic concepts and roles, respectively.
Notation $ id(C)$ stands for test, $R^{-}$ is reverse operator, $ \exists R. C $ is the modal ``exists'' operator.

\subsubsection{From Second-Order to First-Order}\label{subsubsec:SO-FO}
Here, we explain how atomic modules become standard predicates in the sense of first-order logic.
Propositional logic is a fragment of first-order logic. To see this,  view propositions as predicate symbols 
over one-element domain. Similarly, first-order logic is a fragment of second-order logic. In second-order logic, 
we allow quantification over relations, i.e., over sets of tuples of domain elements. If every such set is a singleton,
and every relation we quantify over is unary, then second-order quantifiers behave as quantifiers over domain elements,
and second-order collapses to first-order. Our modules are sets of structures.
When second-order logic collapses to first-order, {\bf each structure becomes a tuple of domain elements},
and each module becomes a relation in the sense of first-order logic. In this case, also, third-order fixed points
that represent sets of structures, collapse to the usual fixed points that are relations (sets of tuples).
Second-order logic is a great tool that allows us to talk about many things uniformly.

\subsubsection{ Datalog$^+_-$ as a Modal Logic}

Just as  our calculus, Datalog$^+_-$  \cite{Datalog+/-} forces multi-dimensional expressions into binary. This is not immediately apparent,
unless one carefully examines the rules  in search of explicit or implicit existential quantifiers. 
Those quantifiers produce possibility 
modalities, which turn into necessity under negation.  For example, the following program
$$
\begin{array}{l}
emp(X) \rightarrow \exists Y hasMgr(X,Y), emp (Y)\\
 {\it person}(P) \rightarrow \exists F\  {\it fatherOf} (F,P)\\
 {\it fatherOf} (F,P) \rightarrow  {\it person}(F)
  \end{array}
$$
translates into the following formula in our calculus:
$$
\begin{array}{l}
(emp(X) \rightarrow \langle hasMgr(\underline{X},Y)\rangle \ emp (Y))\\
\land ({\it person}(P) \rightarrow \langle  {\it fatherOf} (F,\underline{P})\rangle \ \top)\\
\land (\lbrack {\it fatherOf} (F,\underline{P}) \rbrack \  {\it person}(F)),
 \end{array}
$$
where atomic module symbols are elements of the set  $\delta:= \{ {\it emp, hasMgr, person, fatherOf }     \}$, and implication ($\rightarrow$) is the usual abbreviation. 

The semantics of  Datalog$^+_-$ is given by the least model that satisfies the rules (which are in the Horn form).
By a well-known construction, this least model is expressible by a simultaneous fixed point $\mu \delta . \phi$.
The most common reasoning task for Datalog$^+_-$  is {\em certain} reasoning, which amounts to computing if a query is true in every expansion of a given database that 
satisfies a given Datalog$^+_-$ program. 

The language of Datalog$^+_-$ can be greatly enriched following $L\mu\mu$ (\ref{calc-bin-rel-2-sorted}), by allowing compound expressions inside the modalities, 
e.g. regular expressions or the special-purpose operations we gave as examples. 
\ignore{Embedding  Datalog programs into classical logic is also a good extension.
Embedding fixed point constructs, which are inherent to the Datalog family, into classical logic is also a good extension as purely fixed-point based logics are not always compositional.\footnote{For example, lack of compositionality is a major issue for Answer Set Programming under stable model semantics.}
}




\section{ Queries, Machines, Modalities}\label{sec:Queries-Machines-Modalities}

In this section, we connect two declarative ways of specifying problems, as in database query answering and as in temporal
logic model checking, with each other and with a machine-based approach that has an imperative flavour. The connection between the three approaches  is possible because of information propagation.

\subsection{Structures as Computing Devices}

\subsubsection{Abstract Machines} 

We now introduce very {\bf simple  {abstract machines}}, similar to automata. 
The main (and only) operation of these machines is  a task broadly solved in practice, the Model Expansion task \cite{MT05-long}, see Definition \ref{def:MX}. 
Our computing devices are {\em $\tau$-structures}. All they do is {\bf store information and expand}.  
Declarative specifications of modular systems (formulae)
are the programs for these simple devices, and the modal logic from the previous section can be viewed 
as a programming language for these machines. 
Program {\em execution}  consists of {\em constructing a transition system}.

\subsubsection{Transition System Determined  by $\alpha$}
Without loss of generality, assume that all atomic modules are represented by binary relations, as in the semantics of the calculus of binary relations (\ref{eq:algebra-inf-flow}). To talk about all possible executions of $\alpha$, we construct a transition system \TS{\ }by starting from 
 ${\cal T} := (V; (M_a^{\cal T})_i, (M_p^{\cal T})_j)$,   as in Definition \ref{def:T}, and adding labelled edges produced by each subformula.
\begin{definition}[Transition System \TS]
	Given formula $\alpha$ in the calculus of binary relations and valuation $({\cal V},v)$, where $v$ maps $vvoc(\alpha)$ to $\tau$, the {\em labelled transition system  \TS {\ }that represents 
		possible executions of $\alpha$} is
	$$
	\mbox{ \TS }:= (V; L),
	$$
where $V$ is the set of $\tau$-structures,  
and  $L$  is a set of labelled edges.  The edges are constructed according to the following rule:
If    $\alpha_i$ is  a subformula of $\alpha$, then 

$$	
 (  \strB_1 ,  \strB_2)  \in L \ \ \ \  \mbox {iff}  \ \ \ \    (  \strB_1 ,  \strB_2) \in  \llbracket \alpha_{i}  \rrbracket ^{{\cal T},{\cal V},v}, 
$$
and the label of 	$ (  \strB_1 ,  \strB_2)  $ is $\alpha_i$.
\end{definition}

Notice that since valuation  $({\cal V},v)$ is given, function ${\cal V}$ specifies the domain and interpretations of the atomic symbols.
Thus, generating \TS{\ }is a constructive process. 

\subsubsection{Complexity Measures}\label{subsubsec:complexity-measures}
The time and space complexity of constructing \TS{\ }is associated with the complexity of 
satisfying $\alpha$ over a given domain (which is the Model Expansion task).
While data, expression and combined complexity are considered the main measures of the amount of computation required,
we argue that output complexity (in the sense of output-sensitive algorithms)
is hugely important as well because it affects the size of the transition system
and the number of steps required.
	Recall that in the basic labelling algorithm that is in the foundation of symbolic model checking, 
	three parameters are multiplied: the number of  vertices, the number of edges, and the size of the formula. 
Data complexity is responsible for the number of vertices, expression complexity measures the size of the formula,
and output complexity is responsible 
for the number of edges. We believe that {\em input width} and {\em output width} of a formula should be considered 
separately. The former affects data complexity, the latter affects output complexity.
If we consider all  these parameters in interaction, we may be able explain, e.g., why 
some algorithms for PSPACE-complete problems (such as model checking) work reasonably well in practice, while some  algorithms 
for provably polynomial time problems behave 
very badly.


\subsubsection{Executions for a Given Input}

When a particular input structure $\strA$ is given,  a concrete execution materializes.
In this case, we can connect reachability in \TS {\ }with  executing $\alpha$ in the following sense.

\begin{definition}[Reachability in \TS]  {\ }\\

We say that state formula $\phi$ is reachable from the initial state $\strA$ by the execution of $\alpha$, notation	\Reachphi, if,
in the transition system constructed by executing $\alpha$, there is an edge, labelled with $\alpha$,
 from the state $\strA$ to a state where $\phi$ holds.

\end{definition}

From the construction of \TS, it follows that one has to construct the edges for all the subformulae of $\alpha$, in order to construct the edge for $\alpha$. 

We will be interested in the case where $\phi$ is a conjunction of ground atoms $ \bigwedge \bar{\cal E} $.

\ignore{
	\blue{	data, expression and combined complexity are studied under the simplifying assumption that 
	outputs are ignored. We argue that outputs should not be ignored as they determine the size 
	of the evaluation graph (and thus the complexity of the evaluation)}
}

\subsection{General Evaluation Problem \MXE } \label{def:MXE}

Recall that model checking is a special case of model expansion where the input structure interprets the entire vocabulary 
(second-order variable vocabulary in our case).
 We now define another problem that is essentially model checking -- a higher-order counterpart of the Query Evaluation Problem \QE. 

\begin{definition}[{\sf Evaluation Problem \MXE}]{ \ }
	
	\underline{Given:} 
	
	\begin{compactenum}
			\item valuation $({\cal V}, v)$,
		\item formula  $\phi({\bX})$ in the calculus without information propagation,
		\item $\sigma$-structure $\strA$, where $\strA =(A; \bR_{\sigma}) $ and $\bR_{\sigma}$ interprets a part 
		of the visible (free) relational variables of $\phi$,
	
		\item a tuple of relations $\bar{\cal E}$ that interprets the rest of the visible (free) relational variables of $\phi$.                 
		
	\end{compactenum}

	\underline{Find: }  Structure $\strB$ such that  
	$$\overbrace{(\underbrace{A,\bR_{\sigma}}_{\strA},\bR, \bar{\cal E})}^{\strB}\models_{({\cal V},v)} \phi( {\bX},  \bY)[\ \bR_{\sigma}/{\bX}, \   \bar{\cal E}/\bY\ ] ?$$ 
Tuple of relations $\bR$ interprets the internal (not free) relational variables of $\phi$.	We require, as usual, that for the substitutions, the arities have to coincide. Note that $\sigma$ may be empty. The domain of $\strA$ is the same as the one 
	given by $\cal V$, and $\sigma$ must be a subset of $\tau$, which is the concrete vocabulary provided by $v$ for the relational variables.
\end{definition}

Notice that {\bf this task imposes a direction of information propagation}, thus it turns an expression in a ``flat'' algebra to one in the ``dynamic'' one.
The General Evaluation Problem	(\MXE){\ }is equivalent to the Model Checking Problem (\MC){\ }for a formula where the internal relational symbols are  second-order-existential quantified. Model Expansion (\MX){\ }for $\phi({\bX})$ is equivalent to first guessing the output relations  $\bar{\cal E}$    and then using  \MXE{\ }to check.

\begin{proposition}
Query Evaluation problem	\QE{\ }is a particular case of the General Evaluation problem \MXE.
\end{proposition}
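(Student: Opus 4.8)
The plan is to exhibit, for every instance of \QE, a matching instance of \MXE with the same answer, leaning on the two propositions already established: that first-order object variables can be mimicked by second-order variables ranging over singletons, and that the formalism collapses to FO(LFP) exactly when every relation is unary and interpreted by a singleton. Since \MXE is, by the remark preceding the statement, equivalent to model checking with the internal symbols existentially quantified, the reduction amounts to encoding a first-order model-checking-with-parameters instance as a lifted one over singletons.

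First I would fix a \QE instance $(\phi(\bx),\ba,\strA)$ with $\sigma = vocab(\phi)$ and $\ba = (a_1,\dots,a_n)$, and build the lifted formula in the calculus without information propagation. Each relational symbol of $vocab(\phi)$ becomes a visible input variable in $\bX$, interpreted by $\bR_\sigma$, where $\bR_\sigma$ is simply the interpretation $\strA$ assigns to $vocab(\phi)$. Each free object variable $x_i$ becomes a fresh unary variable $Y_i$ constrained to a singleton, so that the tuple $\bY$ is interpreted by the given input $\bar{\cal E} := (\{a_1\},\dots,\{a_n\})$; this is precisely the mimicking of first-order variables by singleton-valued second-order variables. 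Finally, each bound (quantified) object variable of $\phi$ becomes an internal, singleton-constrained variable; these are the non-free variables whose witnesses make up the tuple $\bR$ that \MXE is asked to find, mirroring the witnesses needed to satisfy the quantifiers of $\phi[\ba/\bx]$. I then take the valuation $({\cal V},v)$ so that $\cal V$ fixes the domain to that of $\strA$ (its active domain suffices) and $v$ is the identity on the relational vocabulary.

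The final and main step is to verify the equivalence $\strA \models \phi[\ba/\bx]$ iff the assembled structure $(A,\bR_\sigma,\bR,\bar{\cal E})$ satisfies the lifted formula under $({\cal V},v)$, i.e.\ iff the \MXE instance is a yes-instance. By the collapse proposition, once all the introduced variables are unary and singleton-valued, the lifted satisfaction relation coincides clause-by-clause with ordinary first-order satisfaction in which each $x_i$ is bound to the unique element $a_i \in \{a_i\}$. I would prove this by induction on the structure of $\phi$, checking that the lifted reading of each connective, of selection $\sigma_\Theta$ (equality of singleton relations matching equality of the underlying elements), and of projection $\pi_\delta$ (second-order existential quantification over a singleton-constrained variable matching first-order existential quantification) reproduces the corresponding first-order clause. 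The main obstacle is exactly this bookkeeping: one must maintain the singleton invariant through projection and fixed points so that the two satisfaction relations agree at \emph{every} subformula and not merely at the root, and one must match the internal witnesses $\bR$ found by \MXE with the witnesses chosen for the quantifiers of $\phi$. Once that invariant is in place, the equivalence holds, and \QE is realized as the special case of \MXE in which all visible and internal variables are unary and singleton-valued.
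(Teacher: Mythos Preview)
Your proposal is correct and follows the same underlying idea as the paper: both rely on the collapse of the lifted formalism to first-order logic when all second-order variables are unary and singleton-valued (the content of Subsection~\ref{subsubsec:SO-FO} and the earlier proposition on mimicking object variables). The paper's own proof is a one-liner that simply invokes that collapse, whereas you spell out the reduction explicitly---constructing the \MXE\ instance, identifying $\bR_\sigma$, $\bar{\cal E}$, and the internal witnesses $\bR$, and sketching the inductive verification---so your argument is strictly more detailed than what the paper provides.
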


\begin{proof}
The statement	follows immediately from our explanation about why first-order logic 
is a fragment of second-order in Subsection \ref{subsubsec:SO-FO}. 	
	\end{proof}

\begin{remark}
	The definitions of the General Evaluation task \MXE{\ }and the Model Expansion task \MX{\  }could be given with partial interpretations of the predicates on the input (e.g. in terms of sets of ground atoms or 3-valued structures). That version is more convenient in other contexts, but is not necessary here. 
\end{remark}

 \subsection{Temporal Logic Tasks: \TMC, \TSAT }
 Here we introduce counterparts of Model Checking and Satisfiability tasks in the context of modal temporal logics.
 
 \begin{definition}[Model Checking: \TMC]{\ }\\
 	\underline{Given:} valuation $({\cal V}, v)$, transition system ${\cal T}$, $\sigma$-structure $\strA$, where $\sigma \subseteq \tau$,
 	state formula $\phi$. \underline{Decide:} ${\cal T}, \strA \models_{({\cal V},v)}  \phi$.
 \end{definition}
 
 A common version of this problem is the one where one is  asked to compute all the states where the formula is true. This version is 
 used in practical model checking algorithms for temporal logics. We restrict our attention to the problem of finding some structure of that sort.
  \begin{definition}[\TMC-SEARCH]
 	\underline{Given:} valuation $({\cal V}, v)$, transition system ${\cal T}$, 
 	state formula $\phi$. \underline{Find:} structure $\strA$ such that  
 	$$
 	{\cal T}, \strA \models_{({\cal V},v)}  \phi.
 	$$
 \end{definition}
  The transition system on the input is determined by the valuation and, in a way, is redundant. The valuation also gives the domain of $\strA$.

 And here is another important problem that looks similar on the surface, but can be drastically different computationally.
 \begin{definition}[Satisfiability: \TSAT] {\ }\\
 	\underline{Given:}  State formula $\phi$, valuation $v$ that fixes a concrete vocabulary. 
 	\underline{Find:} valuation ${\cal V}$, transition system ${\cal T}$, structure $\strA$  (one of the states of  ${\cal T}$), such that 
 	$${\cal T}, \strA \models_{({\cal V},v)}  \phi.$$
 \end{definition}
 
The main difference here is that we need to find a valuation ${\cal V}$, which determines the domain and the interpretation of ``unary'' and ``binary''
module symbols. The transition system  ${\cal T}$  is then constructed from those.
 
 \subsection{Connecting Machines, Queries, Modalities}
 
 In this subsection, we show that, in our translation from ``flat'' to modal logic,  the \TMC{\ }task is the same as the expansion-evaluation task \MXE,
 and is also equivalent to the reachability in the execution graph.  Surprisingly, assigning input and outputs to the internal variables does not matter.
 

 \begin{theorem}
 	Suppose we are given,  on the input, a formula  $\phi$ in the ``flat'' algebra,  $({\cal V},v)$, $\strA$ and  $\bar{{\cal E}}$.
 For any  assignments of inputs and outputs to the internal (not free) variables $\bR$ of $\phi$ that produces $\alpha$ from $\phi$, we have:
  $$
 \begin{array}{ccc}
  $\TMC$ && $\MXE$ \\ 
   \overbrace{{\cal T}, {\strA} \models \  \langle  \alpha  \rangle \bigwedge \bar{\cal E} } &    \Leftrightarrow   & \overbrace{({\strA},\bR ,\bar{\cal E} ) \models {\phi} ( \bar{\cal E}   )  } \\
    \Leftrightarrow    &  $\REACH$ &  
 \end{array}
 $$
 \end{theorem}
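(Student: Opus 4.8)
The plan is to establish the two equivalences $\TMC \Leftrightarrow \REACH$ and $\TMC \Leftrightarrow \MXE$ separately; together they close the triangle. The first is essentially definitional, so the real work sits in the second, which I would reduce to a single structural-induction lemma relating the dynamic extension of $\alpha$ to the flat extension of $\phi$. The independence from the internal input/output assignment, which makes the statement surprising, will be folded into the formulation of that lemma.

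For $\TMC \Leftrightarrow \REACH$ I would simply unfold definitions. By the construction of \TS, a pair $(\strB_1,\strB_2)$ carries an edge labelled $\alpha$ exactly when $(\strB_1,\strB_2) \in \llbracket \alpha \rrbracket^{{\cal T},{\cal V},v}$, and the definition of \Reachphi{\ }with $\phi = \bigwedge \bar{\cal E}$ asks precisely for such an $\alpha$-labelled edge from $\strA$ to a state at which $\bigwedge \bar{\cal E}$ holds. On the modal side, the diamond clause of the $L\mu\mu$ semantics gives $\strA \in \llbracket \langle \alpha \rangle \bigwedge \bar{\cal E} \rrbracket^{{\cal T},{\cal V},v}$ iff there is $\strB$ with $(\strA,\strB) \in \llbracket \alpha \rrbracket^{{\cal T},{\cal V},v}$ and $\strB \in \llbracket \bigwedge \bar{\cal E} \rrbracket^{{\cal T},{\cal V},v}$. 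These are syntactic variants of one another once the edge-labelling of \TS{\ }is expanded, so the equivalence is immediate.

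For $\TMC \Leftrightarrow \MXE$ I would prove the following invariant by induction on the structure of $\phi$: for the dynamic formula $\alpha$ obtained from $\phi$ by any underlining of inputs in its atomic modules, $(\strA,\strB) \in \llbracket \alpha \rrbracket^{{\cal T},{\cal V},v}$ holds iff $\strA$ and $\strB$ agree off the output vocabulary $\varepsilon_\alpha$ and there is an interpretation $\bR$ of the internal (projected) variables for which $(\strA|_{\sigma_\alpha}, \bR, \strB|_{\varepsilon_\alpha})$ satisfies $\phi$ in the flat semantics. The base cases are read directly from the action clause, which keeps $\strA$ and $\strB$ equal outside $\varepsilon_{M_a}$ and forces $\strB \in {\cal V}(v,M_a)$, and from the test clause for propositions. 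The cases for $\cup$ and $\mu Z_j$ go through because the invariant is preserved by unions and by the intersection-of-prefixpoints used in both semantics, the invariant furnishing a monotone correspondence between the lattices $2^{V^{\cal T}}$ and $2^{V^{\cal T} \times V^{\cal T}}$ under which least fixed points match; projection $\pi_\delta$ is exactly where the internal variables are existentially relaxed in both coordinates, supplying the witness $\bR$. Granting the invariant, I instantiate the diamond: the successor $\strB$ forced by $\langle \alpha \rangle \bigwedge \bar{\cal E}$ is one with $\strB|_{\varepsilon_\alpha} = \bar{\cal E}$, and the invariant converts this into $\exists \bR\ (\strA, \bR, \bar{\cal E}) \models_{({\cal V},v)} \phi$, which is precisely the \MXE{\ }Find-condition.

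The main obstacle I expect is the complementation case $-\alpha$, together with the independence claim. For complementation the dynamic semantics takes the full relative complement $V^{\cal T} \times V^{\cal T} \setminus \llbracket \alpha \rrbracket^{{\cal T},{\cal V},v}$, and I must check that this meshes with flat negation only after the off-$\varepsilon$ agreement constraint and the projection of internals are accounted for; the cleanest route is to phrase the invariant so that negation is taken at the level of the diagonal-plus-agreement relation, matching the derived operators $\sim$ and $\downarrow$ rather than raw $-$. For the independence claim the key observation is that every internal symbol is projected away by some $\pi_\delta$ surrounding it, and the projection clause relaxes both coordinates on $\tau \setminus \delta$; hence whether such a symbol was tagged input or output never appears in the resulting extension. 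I would make this precise by showing the inductive invariant is insensitive to the input/output partition of the projected symbols, so that any two underlinings of the internal variables yield the same $\llbracket \alpha \rrbracket^{{\cal T},{\cal V},v}$, and therefore the same three-way equivalence.
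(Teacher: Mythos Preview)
Your proposal is substantially more detailed than the paper's own argument, which is only a one-paragraph sketch. The paper argues operationally: building \TS{\ } bottom-up amounts to guessing the inputs, propagating through each atomic action to obtain outputs, and climbing the formula; it then asserts that this process ``is equivalent to guessing all the internal variables first, then checking if the guess satisfies the atomic modules,'' which is exactly your $\exists \bR$ invariant stated informally. So the conceptual core --- that projection existentially quantifies internal symbols and therefore erases any trace of their input/output tagging --- is the same in both, and your $\TMC \Leftrightarrow \REACH$ step and your projection case match the paper's intent directly.

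What your approach buys is an explicit inductive invariant and an honest identification of where the difficulty lies. The paper does not discuss the complementation case at all, nor does it spell out why the ``guess then check'' reformulation survives $-$; your plan to reformulate the invariant relative to the off-$\varepsilon$ agreement relation (so that dynamic $-$ corresponds to flat negation only modulo that constraint, via $\sim$ and $\downarrow$) is the right instinct and is genuinely needed to make the induction go through. In short: same key idea, but your route supplies the structural scaffolding and the negation analysis that the paper leaves implicit.
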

 
\ignore{ 
   $$
   \begin{array}{ccccc}
   TMC &&&& MXE \\ 
   \overbrace{{\cal T}, {\strA} \models \  \langle  \alpha  \rangle \bigwedge \bar{\cal E} } & \Leftrightarrow    &    {\sf ACCEPT}_{\alpha}^{\sigma} ({\strA},\bar{\cal E} )  &  \Leftrightarrow   & \overbrace{({\strA},\dots ,\bar{\cal E} ) \models \red{\phi} ( \bar{\cal E}   )  } 
   \end{array}
   $$
}

 It follows that temporal logic model checking (symbolic, SAT-based, etc.) can be used for solving the Evaluation problem. 
  Vise versa, techniques developed for query evaluation, may influence model checking algorithms, for the corresponding fragments.

\begin{proof}
	We explain the main idea. The rest follows from the definitions of the tasks.
 Suppose a direction of information propagation for the internal variables  has been assigned. 
	The transition system{\ }\TS{\ }is constructed bottom up on the structure of the formula. We first guess the extensions
	of the input variables and identify the set 
	of states with those guessed extensions. Then we propagate the information according to each of the modules-actions,
	from inputs to the outputs. This step gives us the extensions of all the output variables of the atomic modules, and the corresponding states.
	The construction proceeds up on the structure of the formula, adding more transitions. This process is equivalent to 
	guessing all the internal variables first, then checking if the guess satisfies the atomic modules, then proceeding as before.
	Thus, the order of information propagation for the internal variables does not matter.	\end{proof}

\ignore{
\begin{corollary}  ???????????????
	The combined complexity of {\  }  \MXE {\  } is equal to the complexity of temporal modal checking {\  }  \TMC {\  } 
	for reachability queries.
\end{corollary}
}
We can think of $\phi$ as a specification of an algorithm, 
and of a corresponding $\alpha$, where the information propagation to the internal variable 
is fixed, as an implementation of it. 
The direction of information propagation is implementation-dependent.
While it is not important in theory,  it does affect the practical complexity of algorithms,
as it influences the complexity measures discussed in Subsubsection \ref{subsubsec:complexity-measures}.

A connection between database query evaluation and temporal model-checking has been noticed before.
In  footnote 10 of the paper ``From Church and Prior to PSL'' \cite{Vardi-Church-Prior-PSL},
Moshe  Vardi wrote that the two problems are ``analogous'', and that ``the study of the complexity of database query evaluation
started about the same time as that of model checking''. However, we are not aware of any papers where a precise correspondence has been established. 
This correspondence is closely related to an important question we are going to discuss next.

\ignore{

\begin{quote}
	The model-checking problem is analogous to database query evaluation, where we check the
	truth of a logical formula, representing a query, with respect to a database, viewed as a finite
	relational structure. Interestingly, the study of the complexity of database query evaluation
	started about the same time as that of model checking [122]. \cite{Vardi-queries}
\end{quote}

M.Y. Vardi. The complexity of relational query languages. In Proc. 14th ACM Symp. on
Theory of Computing, pages 137–146, 1982.

}


\begin{figure*}[!htbp]
	
	\begin{center}
		\begin{tabular}{|c|c|} \hline
			{\bf Classical RA} & {\bf Lifted RA} \\ \hline \hline
			Basic units are Relations = & Basic units are  Modules =  \\ 
			sets of tuples of domain elements &  classes (sets, if the domain is fixed) of structures \\ \hline
			Object variables & Relational variables \\ \hline
			
			
			{\sf Query Evaluation} task, \QE &{\sf  Evaluation} task, \MXE \\ 
			
		check if $\bar{a}$ is in the	relation  defined w.r.t. $\cA$  & check if $\bar{\cal E}$ is in an expansion of  $\cA$ \\ \hline
				{\sf Query Computation:}	mismatch	& {\sf Model Expansion, \MX:} no mismatch	\\	
							inputs are structures,  outputs are relations	&  both inputs and outputs are structures	\\	\hline
						hard to connect to modal logics & connection to modal logics is straightforward 	\\	\hline	
			Fixed points = sets of tuples,  &	Fixed points = sets of structures,  \\ 	
				$\mu Z. \phi(Z) $ is a relation &	$\mu Z. E(Z) $ is a set of structures \\ \hline 
			
					\multicolumn{2}{c}{{  }}\\	
			\multicolumn{2}{c}{{With Information Propagation}}\\	 \hline
			Resulting modal logic is ``propositional'' & Resulting modal logic is ``first-order'' \\ \hline 
			States in TS are one-element structures  & States in TS are structures \\ \hline 
						
				  		\end{tabular}
				  	\end{center}
				  	\caption{Comparison of classical and lifted setting for Relational  Algebra (RA) with recursion}
				  	\label{fig:summary}
				  \end{figure*}


\ignore{				  
	\begin{figure*}[h]			  					
		\begin{center}
			\begin{tabular}{|c|c|} 	
				\multicolumn{2}{c}{{  }}\\	 
				\multicolumn{2}{c}{{ }}\\	 \hline
				{\bf Old } & {\bf New } \\ \hline \hline
			 ``Flat''	Tarskian structures &  Structures of structures, ``time crystals'' \\ \hline
					 States are domain elements of a ``flat'' Tarskian structure  & States are structures  \\ \hline
			 	 Propositions are predicates ranging over domain elements &  Propositions are relations over one-element domain \\ \hline
				Actions are hard to match with tasks and states & Actions are expansions \\ \hline
				 Guards (or edges) come from the ``outside'' & Guards (or edges) are subformulae  \\ \hline
			  {\bf ?}& Satisfiability = Model Checking (in the Propositional Case) \\  \hline
			    &  = ...\\  \hline
			  		\end{tabular}
	\end{center}
	\caption{Explanations of Connections between Classical Logic and Modal Logics built on top of Propositional}
	\label{fig:explanations}
\end{figure*}
}

\ignore{
	
	\subsubsection{Collapse to the Propositional-FO(LFP) Case} 
Many common modal temporal logics, e.g., PDL, CTL, LTL, $L_\mu$, are defined over transition systems where states are propositional truth assignments.
To obtain the propositional version of our modal logic, we can restrict our language to allow unary second-order variables only, introduce a propositional atom $X(a)$ for each unary relational variable $X$ and for each domain element $a$. Then each state in the transition system corresponds to 
a propositional truth assignment to $\{ X_1(a_1), \dots, X_1(a_{|dom(\strA)|}) , \dots  X_k(a_1), \dots, X_k(a_{|dom(\strA)|})\} $.\\
With such a restriction, on the query side, we get  FO(LFP) with a bit unusual syntax.
}

\subsection{On Robust Decidability of Modal Logics}\label{sec:ModalRobustDecidability}
Modal logic, starting from the simplest modal logic ML where the necessity and possibility modalities are added 
to the propositional logic, to much more complex  logics with path and state quantifiers and fixed points such as CTL, LTL, $L\mu$, are known for their good
computational properties.
 Moshe Vardi posed the question to identify the main reasons for the robust decidability of modal logics,
 and  partially answered it \cite{Vardi-Modal-Robust}. In a paper with the same title \cite{Graedel-Modal-Robust},
Erich Gr\"{a}del discussed the problem further. He states the main motivation for this research as: ``We would like to have more powerful logics than ML, CTL and even the $\mu$-calculus that retain the nice properties of modal logics.'' 
Vardi also wrote that ``... modal logic, in spite of its apparent propositional syntax, is essentially a first-order logic, since the necessity and possibility modalities  quantify over the set of possible words ...'' model-checking problem for the  modal logic ML can be solved in linear time,
 while satisfiability is PSPACE-complete. On the other hand, validity, and thus, satisfiability, is highly undecidable.

Previous  partial explanations of robust decidability of modal logics are related to  two-variable fragment of first-order logic FO$^2$, 
 finite model property, tree model property, guarded fragments of first-order fragments, bisimulation-invariance and the characterization theorems.
 However, all of these explanations view modal logics as interpreted over Tarskian structures.
Then states are viewed as domain elements of a ``flat'' Tarskian structure, propositions are predicates ranging over domain elements.
Still, in all these explanations,
the complexity of satisfiability remains 
 hugely different in the modal and in the classical setting.



While this subject deserves a separate paper, we provide a brief explanation here. 
In our understanding, possible words are structures, and the modalities quantify over those, not over domain elements.



We believe that the following  is needed for a computationally good behaviour of a logic:
		\begin{itemize}
		\item information flow,
		\item an initial structure.
	\end{itemize}



	
The initial structure can just be a domain.
Of course, if we have a domain, the task is no longer Satisfiability, but Model Expansion. 
For the same logic, the latter task is of significantly lower complexity.
In  the propositional  case, we get the domain for free, as is explained below, even when we consider Satisfiability task.

We would now like to draw an analogy with physics. The process of producing a modal logic from a classical one  is very 
similar to the well-understood physical process of crystal formation from a liquid matter.  
The conditions needed for crystals to form requires the presence of a force (of information flow here),
and a seed (a structure in our case). During crystal formation, energy gets released.
Released energy is also responsible for a special form of  perpetual motion in the fascinating concept of time crystals invented by 
a Nobel laureate Frank Wilczek in 2012.

\ignore{
	
		\begin{itemize}
			\item A ``force'' (e.g., of information flow)
			\item A ``seed'' (an initial structure)
		\end{itemize}A transition system is a structure of structures, it is not a ``flat'' structure in the Tarskian sense. 
	In the modal case, we have more power in the semantics, we need less power in the syntax -- e.g. we drop from FO(LFP) to modal-propositional with LFP. 
	This is very similar to crystal formations -- when matter crystallizes, energy is released.}

\subsubsection{Propositional Case}

By the propositional case, we mean a modal logic built on top of propositional. 
Typical examples include CTL, LTL, $L\mu$, PDL, but we also allow all the operations of  $L\mu\mu$. 
As in the more general case, the  states in the transition system $\cal T$ are structures.


If a propositional  $L\mu\mu$ formula is satisfiable, it is satisfiable over a transition system built from structures over one-element domain.	
To see this, 	take an arbitrary domain. We construct equivalence classes of domain elements induced  by the subsets of the set of propositional variables.
	Each such  class is fully determined by a structure with one element. Thus, one domain element is enough.
 
	Since we always have a fixed domain, temporal satisfiability is equivalent to the search version of temporal model checking, \TSAT= \TMC-{\rm SEARCH}, which, in turn, is related to model expansion in the classical logic setting. 
Thus, the existence of a fixed domain, together with the force of information propagation,   explains robust decidability of modal logics that are built on top of propositional ones.
With our explanation, all the previous explanations remain true. For instance, 
we do have only unary and binary predicates, but in a generalized sense.
Our guards are generalized too, and so is the tree model property.

\ignore{

 \subsection{Time Crystals}
 
 It seems that breaking the symmetry of time gives modal logics extra ``energy'' that is responsible for their robust decidability for satisfiability, compared to their ``flat'' counterparts - FO, MSO, FO(LFP).
 So, perhaps modal logics are a logician's version of time crystals? 
 
 The next question is in which case modal logic does not have a computational advantage over the ``flat'' counterpart for the satisfiability question?
 
 {\tt define a reversible fragment of the calculus. No energy is wasted in each gate. The crystal structure and modal advantages for satisfiability are lost}
 
}

\section{Conclusion}\label{sec:Conclusion}

The relational data model has recently been heavily criticized
for being outdated, not being able to  link data, to the degree that it's 
been called a ``legacy technology''. We have shown that those claims are perhaps premature.
The same data model that comes from Codd's relational algebra, with two modification that do not affect the main language, 
can represent relational data, graph databases and operations on them, powerful inter-connected solvers, 
ontologies, all in the same framework. 

While logic is already responsible for great technological advances, 
we strongly believe that the next 
technological shift can only happen if we change the units we operate on, as  we proposed here.

\bibliographystyle{plain}
\bibliography{Files/bibliography}

\end{document}